\newif\ifabstract
\newif\iffull
\newcommand{\myparskip}{3pt}
\newcommand{\caln}{\mathcal{N}}
\newcommand{\calt}{\mathcal{T}}
\newcommand{\cali}{\mathcal{I}}
\newcommand{\la}{{\lambda}}
\newcommand{\bx}{\boldsymbol{x}}
\newcommand{\by}{\boldsymbol{y}}
\newcommand{\bmu}{\boldsymbol{\mu}}
\newcommand{\bla}{\boldsymbol{\lambda}}
\newcommand{\bsigma}{\boldsymbol{\sigma}}
\newcommand{\bet}{\boldsymbol{\eta}}
\newcommand{\hy}{\hat{y}}
\newcommand{\hx}{\hat{x}}
\newcommand{\opt}{\texttt{OPT}\xspace}
\newcommand{\alg}{\texttt{ALG}\xspace}
\newcommand{\opa}{\texttt{OPA}\xspace}
\newcommand{\opd}{\texttt{OPD}\xspace}
\newcommand{\russell}[1]{  \ifthenelse{\boolean{showcomments}}
	{\textcolor{blue}{(Russell says:  #1)}}{}}
\newtheorem{theorem}{Theorem}
\newtheorem{lemma}[theorem]{Lemma}
\newtheorem{thm}{Theorem}
\newtheorem{assumption}{Assumption}
\newtheorem{pro}[thm]{Proposition}
\newcommand{\CR}{\texttt{CR}}
\begin{document}

\title{Near-optimal Online Algorithms for Joint Pricing and Scheduling in EV Charging Networks}

\def\thefootnote{$\#$}\footnotetext{These authors contribute equally to this work.}\def\thefootnote{\arabic{footnote}}

\author{Roozbeh Bostandoost$^\#$\thanks{University of Massachusetts Amherst. Email: {\tt  rbostandoost@umass.edu}.} \and
Bo~Sun$^\#$\thanks{The Chinese University of Hong Kong. Email: {\tt bsun@cse.cuhk.edu.hk}.}
\and 
\and
Carlee Joe-Wong\thanks{Carnegie Mellon University. Email: {\tt cjoewong@andrew.cmu.edu. } Partial support was provided by NSF CNS-2103024.}\and
Mohammad~Hajiesmaili\thanks{University of Massachusetts Amherst. Email: {\tt hajiesmaili@cs.umass.edu}. Partial support was provided by NSF CAREER-2045641, CNS-2102963, and CPS-2136199.} 
}

\begin{titlepage}
\maketitle

\thispagestyle{empty}

\begin{abstract}
With the rapid acceleration of transportation electrification, public charging stations are becoming vital infrastructure in a smart sustainable city to provide on-demand electric vehicle (EV) charging services. As more consumers seek to utilize public charging services, the pricing and scheduling of such services will become vital, complementary tools to mediate competition for charging resources. However, determining the right prices to charge is difficult due to the online nature of EV arrivals.
This paper studies a joint pricing and scheduling problem for the operator of EV charging networks with limited charging capacity and time-varying energy cost.
Upon receiving a charging request, the operator offers a price, and the EV decides whether to admit the offer based on its own value and the posted price. The operator then schedules the real-time charging process to satisfy the charging request if the EV admits the offer. 
We propose an online pricing algorithm that can determine the posted price and EV charging schedule to maximize social welfare, i.e., the total value of EVs minus the energy cost of charging stations. Theoretically, we prove the devised algorithm can achieve the order-optimal competitive ratio under the competitive analysis framework. Practically, we show the empirical performance of our algorithm outperforms other benchmark algorithms in experiments using real EV charging data. 
\end{abstract}
\end{titlepage}

\maketitle
\section{Introduction}
EVs have long promised to make transportation systems more efficient and sustainable~\cite{epa}. Bloomberg predicts that by 2040, more than half of the new car sales will be EVs, accounting for 5$\%$ of total electricity usage~\cite{bloomberg}.
According to the New York Times, California has already set a 2035 deadline for all new cars sold in California to be powered by electricity or hydrogen and be free of greenhouse gas emissions. The rule also requires that 35\% of new passenger vehicles sold by 2026 produce zero emissions increasing to 68\% of new vehicles.
This predicted EV popularity follows a similarly rapid growth in the number of public charging stations. For example, the city of Los Angeles has 4,689 public charging stations and has added 1185 new charging stations since mid June 2022. ChargePoint, the world's largest EV charging network, plans to build out 2.5 million public charging ports by 2025~\cite{electrek}. Despite the growing number of charging stations, however, each individual station has fairly limited charging capacity that may not be enough to serve the growing demand from EV vehicles. The scale of this growth then raises a fundamental challenge: \emph{how should EV charging networks manage the ever-increasing demands for their services?}

Answering this question naturally raises \emph{dynamic pricing} as the solution. To regulate their users' demands, charging platforms may wish to charge users more when demand is high relative to their limited capacity and vice versa when demand is low.
The advantages of dynamic pricing over static pricing (i.e., users are charged with the same price independent of user demand and station capacity) have been extensively discussed in theoretic studies~\cite{borenstein2002dynamic, al2013cloud, chakraborty2013dynamic}.
In practice, ChargePoint is already piloting such dynamic pricing~\cite{chargepoint}. 
Dynamic pricing allows these platforms to adjust their charging demands to match available resources,  
i.e., the transformers' capacity, while carefully scheduling which users receive resources at which time can further improve revenue and efficiency. Yet despite considerable prior work on both pricing and scheduling problems for adaptive EV charging networks~\cite{lin2021minimizing, zhao2015peak, 7914741}, there is still little rigorous theoretical understanding on how they can be jointly optimized to improve social welfare and ensure platform profitability.

Optimizing these prices for charging, however, requires resolving \emph{uncertainty challenges} of the future charging demand in the network. Moreover, the possibility of adaptive EV charging scheduling provides a new design space for further optimization. For example, the users may park their EVs at a charging platform while at work, and the vehicle can be charged at any time before the user leaves work. Thus, the EV's impact on the platform, and the price it should be charged, will depend on how congested the platform is at any time before it leaves. However, users would generally wish to know the prices and their guaranteed charging amount upon their arrivals in order to eliminate payment uncertainty (e.g., as offered by ride-sharing services~\cite{slavulj2016evolution}), at which time future information on platform congestion is unknown.

In this paper, we tackle the problem of joint pricing and scheduling for EV charging from the perspective of a charging operator, who manages a charging station. Our goal is to develop pricing algorithms for admitting EV users to the charging network and scheduling their charging rate such that their energy demand is fully satisfied during their window of availability, and this windows means the time interval that the EV is present at the charging station.
A major challenge in the above pricing and scheduling problem is the uncertainty of the environment in terms of future EV charging demand. EVs arrive at the charging station in an online manner and their availability windows might be different.

The rest of the paper is organized as follows. Section~\ref{sec:prob_statement} introduces the offline and online formulation of the joint pricing and scheduling problem. The proposed online algorithm and our main theoretical result are presented in Section~\ref{sec:alg&res}.
Then Section~\ref{sec:comp_analysis} provides proofs of the key theorems in Section~\ref{sec:alg&res}. Section~\ref{sec:exp_res} validates numerical performance of our proposed algorithm using real data.
Finally, we discuss the related work Section~\ref{sec:related_works} and conclude the paper in Section~\ref{sec:concludsion}.


\subsection{Contributions}
In this paper, we formulate the joint problem of dynamic pricing and scheduling for EV charging in a station with multiple charging ports and limited capacity. In this scenario, some EVs with different private values for getting charge, energy demand, and availability window arrive in \textit{an online manner}, and the objective is to maximize the social welfare by offering a charging price with proper scheduling, given the capacity limit of the station. 

We develop \textit{online algorithms} for the above joint pricing and scheduling problem. Upon arrival of a charging request from an EV, the algorithm calculates a charging scheduling for the EV, and posts a corresponding total charging price for the EV. 
The posted price is determined as a function of the real-time utilizations of the charging station, where the pricing function is designed by solving a differential equation induced by an online primal-dual approach.
The EV compares the posted price from the station and if the price is less than its private value, the value that the EV is willing to pay, it accepts the offer and joins the station. The calculation of the price is based on the charging request of the EV, e.g., energy demand, the availability windows; the utilization of the station, and the time-vary energy cost of the station. 

We then analyze the robustness of the proposed algorithms using competitive ratio (CR) as a well-established performance metric for online algorithms. Our analysis in Theorem~\ref{theo:thresh_func} shows that our algorithm is $O(\ln\theta)$-competitive, where  ${\theta = U D^{\max}/(D^{\min}(L - p^{\max}))}$, $U$ and $L$ are the maximum and minimum private value of per-unit energy requested by the EVs, $D^{\max}$ and $D^{\min}$ are the maximum and minimum availability window of EVs, and $p^{\max}$ is the maximum energy cost.
By establishing a connection between the problem of interest in this paper and two classic online problems in the prior literature, we then derive a lower bound of $\Omega(\ln (UD^{\max}/LD^{\min}))$ for the CR of any algorithm for the joint pricing and scheduling problem. This lower bound result shows that the CR of the proposed algorithm is order-optimal.

Last, we use the Caltech EV ACN-Data dataset~\cite{lee2019acn} on over 700 EV charging sessions over 90 days to evaluate the performance of the proposed algorithm and compare it with the prior algorithms and baselines. The experimental results show that the empirical CR of the proposed algorithm is less than 2 in 80\% of the experiments on average, which is much better than our theoretical bounds. Also, this empirical result shows that our algorithm outperforms other comparison algorithms substantially; the empirical CRs of others is less than 2 in only 65\%, 25\%, and 20\% of the experiments on average.

\begin{table*}
\captionof{table}{Comparison of previous works and ours in terms of multiple criteria}
\vspace{-3mm}
\label{tab:comp_intro}
\footnotesize
\begin{tabular} {|c | c c c c | c|}
 \hline
  & \begin{tabular}{@{}c@{}}\textbf{Integral/} \\ \textbf{Fractional}\end{tabular} & \textbf{Scheduling} &  \textbf{Reusable Resource} & \textbf{Resource Cost} & \textbf{Optimality of CRs}\\ [0.5ex] 
 \hline \hline
 \cite{zhou2008budget}
 & Integral & No & No & No & Yes\\
 \hline
 \cite{zheng2014online}
 & Fractional & Yes & Yes & Yes & N/A\\
 \hline
 \cite{sun2020competitive} & Fractional & Yes & Yes & No & Yes\\
  \hline
 \cite{zhang2017optimal} & Integral & Yes & Yes & No & No\\
 \hline
 \cite{tan2020mechanism} & Integral & No & Yes & Yes & No\\
  \hline
 \cite{sun2022online} & Integral & No & Yes & No & Yes\\ 
 \hline
 \textbf{This Work} & Integral & Yes & Yes & Yes & Yes\\
 \hline
\end{tabular}
\end{table*}

\subsection{The Significance of the Theoretical Results}
We note that although we present our joint pricing and scheduling algorithms in the context of EV charging, the underlying optimization problem is of independent interest and could be applicable to scenarios beyond EV charging. More specifically, the problem of interest in this paper is as an extension of the basic online knapsack problem and captures reusable resources with time-varying resource cost. Consider an online scenario where some items with different values and demand arrive one-by-one over time and stay in the system for a limited duration. Upon arrival of a new item, an online decision maker should decide whether or not to admit the item given its value and demand, the limited capacity of the knapsack, and the time-varying cost of using the knapsack. Further, if the decision is to admit the item, what is the scheduling policy to fully satisfy the demand of the admitted item. 

The online knapsack problem and its variants have been extensively studied in the literature~\cite{sun2020competitive,zhang2017optimal,tan2020mechanism,sun2022online,zheng2014online,zhou2008budget}.
Nearly all online algorithms in this stream of works are based on a similar idea that estimates the price of admitting one item using a function of the knapsack utilization and admits the item if item value is larger than the estimated price. The key of these algorithms lies in how to design the pricing function to guarantee the best possible CR.
We review the literature in Section~\ref{sec:related_works} extensively and in what follows we highlight the differences and significance of our results with the most relevant works.

The basic version of the online knapsack problem without reusable resource and resource cost was first studied in~\cite{zhou2008budget} and an online algorithm was designed to achieve the optimal CR. 
In the general setting, this stream of works can be distinguished based on three criteria: (i) fractional or integral admission decision, i.e., whether each item can be fractionally admitted or not; (ii) scheduling, i.e., whether the demand of the item is fixed or can be flexibly determined over time; and (iii) resource cost, i.e., whether using the knapsack incurs a cost.  In terms of the significance of the result, the ultimate goal is to achieve online algorithms that are \textit{(order)-optimal}, where optimality refers to an online algorithm with the best possible CR either exact or order wise. 
In Table~\ref{tab:comp_intro}, we compare the most related works based on these criteria. 

This paper designs an order-optimal online algorithm for the most general integral setting. 
Compared to the other two works with optimal CRs (i.e.,~\cite{sun2020competitive} and~\cite{sun2022online}), this work makes additional technical contributions in design and analysis of the algorithm in the general setting. 
In particular, we extend the analysis of the fractional admission in~\cite{sun2020competitive} to integral admission by considering two classes of instances and analyzing their corresponding worst-case ratios using different approaches (see Section~\ref{proof:cap_free_comp_ratio} and~\ref{proof:cap_lim_comp_ratio}). 
We extend the fixed demand setting in~\cite{sun2022online} to scheduling setting by formulating an auxiliary cost minimization (see problem~\eqref{p:cost-minimization}) for scheduling decisions and analyzing the CR by an online primal-dual approach. Further, our primal-dual analysis results in a natural design of the pricing function, which is in contrast to the ad-hoc design in~\cite{sun2022online}. In addition, this paper takes into account the resource cost, which adds an extra dimension in the algorithm analysis.

\begin{table}[t]
\centering
\captionof{table}{Notations}
\label{tb:sum_not}
\begin{tabular}{|l | l|}
 \hline
 \textbf{Name} & \textbf{Description} \\ [0.5ex] 
 \hline\hline
 $a_n$ & Arrival time of EV $n$  \\ 
 \hline
 $d_n$ & Departure time of EV $n$  \\ 
 \hline
 $E_n$ & Energy demand of EV $n$  \\ 
 \hline
 $R_n$ & Rate limit of charging power in one time slot for EV $n$  \\ 
 \hline
 $v_n$ & Valuation of EV $n$ for receiving its demand  \\ 
 \hline\hline
 $\caln$ & Set of all EVs  \\ 
 \hline
 T & Number of time slots  \\ 
 \hline
 $\calt$ & \{1, 2,.., T\}  \\ 
 \hline
 $\calt_n$ & Set of time slots that EV $n$ is available at the station  \\ 
 \hline\hline
 C & Charging station capacity \\ 
 \hline
 $p_t$ & Electricity price at time $t$ \\
 \hline\hline
 $y_{nt}$ & The amount of energy that is being delivered to EV $n$ at time $t$ \\
 \hline
 $y_n$ & $\{y_{nt}\}_{t \in \calt_n}$ \\
 \hline
 $x_n$ & Indicator variable which indicates whether EV $n$ has decided\\
  & to be charged at the station or not \\
  \hline
\end{tabular}
\end{table}

\section{Problem Statement} \label{sec:prob_statement}

{
We consider an online pricing and scheduling problem for EV charging management.
A set $\caln:=\{1,\dots,N\}$ of EVs sequentially arrive over a time horizon $\calt:= \{1,\dots,T\}$.
A system operator manages a charging station with multiple chargers and a fixed overall charging rate capacity $C$, and faces a time-varying electricity price $\{p_{t}\}_{t\in\calt}$. Note that we assume that there are sufficient chargers in the station to admit all EVs; however, the per-slot aggregate charging rate is limited to capacity $C$.
Upon arrival, each EV $n$ proposes a request $\gamma_{n} := \{a_{n}, d_{n}, E_{n}, R_{n}\}$, where $a_{n}$ and $d_{n}$ denote the arrival and departure time of EV $n$, $E_n$ is the energy demand, and $R_{n}$ is the rate limit of charging power in one time slot. For the charging request $\gamma_n$, EV $n$ has a private value $v_{n}$ (i.e., willingness-to-pay), which is unknown to the operator. 

After receiving the request $\gamma_n$, the operator posts a price $\xi_{n}$ for serving EV $n$.
Then EV $n$ itself decides to pay $\xi_{n}$ for charging if and only if EV $n$ has a positive utility surplus, i.e.,
    $v_{n} - \xi_{n} \ge 0$;
otherwise, EV $n$ will leave the system without charging and seek for alternative refueling opportunities.

If EV $n$ decides to join the station, the operator collects payment $\xi_{n}$ and schedules its charging $\by_{n}:= \{y_{nt}\}_{t\in\calt_{n}}$, where $\calt_{n}:=\{a_{n},a_{n}+1,\dots,d_{n} - 1\}$ is the set of time slots when EV $n$ is available and $y_{nt}$ is the delivered energy in slot $t$. The charging schedule incurs energy cost $\sum_{t\in\calt_{n}}p_{t}y_{nt}$.
For a quick reference, notations are summarized in Table~\ref{tb:sum_not}.

\noindent\textbf{Offline social welfare maximization problem.} Let $x_{n} \in \{0,1\}$ indicate whether EV $n$ decides to charge at the station. The utility surplus of all EVs is $\sum_{n\in\caln}(v_{n} - \xi_{n}) x_{n}$ and the surplus of the charging station operator is $\sum_{n\in\caln} (\xi_{n} - \sum_{t\in\calt_{n}}p_{t}y_{nt})x_n$. 
Our objective is to maximize the social welfare of all EVs and the operator~\cite{zheng2014online,sun2018eliciting}. 
Define an instance $\cali:= \{\{\gamma_{n}; v_{n}\}_{n\in\caln}\}$ as a sequence of the EV charging requests and the corresponding values.
Given $\cali$ from the start, the offline social welfare maximization problem can be formulated as follows.
\begin{subequations}\label{p:social-welfare}
    \begin{align}
    \max_{x_n, y_{nt}} \quad& \sum_{n\in\caln}v_nx_n - \sum_{t\in\calt} p_t \sum_{n\in\caln_t}y_{nt} \nonumber\\
    {\rm s.t.}\quad& \sum_{t\in\calt_n} y_{nt} \ge E_n x_n,\quad \forall n\in\caln, \label{cnst:full_demand} \\
    & \sum_{n\in\caln_t}y_{nt} \le C, \quad \forall t\in\calt, \label{cnst:capacity}\\\
    & 0\le y_{nt} \le R_n x_n, \quad \forall n\in\caln,t\in\calt_n,\label{cnst:rate}\\
    & x_n \in \{0,1\}, \quad \forall n\in\caln \label{cnst:admission},
    \end{align}
\end{subequations}
where $\caln_t$ denotes the set of EVs that are available at time $t$. Problem~\eqref{p:social-welfare} captures multiple constraints. Constraint~\eqref{cnst:full_demand} requires the charging station to \textit{fully} charge the EV $n$ up to its demand if it accepts the charging offer ($x_n=1$). Constraint ~\eqref{cnst:capacity} says the aggregate of scheduled energy for EVs that are available at each slot $t$ must be smaller than the station capacity. Constraint~\eqref{cnst:rate} requires the scheduled energy in each slot $t$ is less than the EV's rate limit.
}

\noindent\textbf{Online problem.}
We consider an online version of the social welfare maximization problem~\eqref{p:social-welfare}. 
Upon the arrival of each EV $n$, the operator must immediately decide the posted price $\xi_{n}$ only based on the past and current EV requests, and without knowing the future EV requests.  In this paper, we consider the electricity price $\left\{p_t\right\}_{t\in\calt}$ is known to the charging station operator from the start, e.g., predetermined time-of-use prices or estimated electricity price based on historical data. 
The electricity price and charging station capacity $C$ are defined as fixed setup information of the system.
Let $\opt(\cali)$ and $\alg(\cali)$ denote the social welfare achieved by the optimal offline problem and an online algorithm under instance $\cali$, respectively. An online algorithm is called $\pi$-competitive if $\opt(\cali)/\alg(\cali) \le \pi$ holds for all instances, where $\pi\ge1$ is called the competitive ratio (CR). 
We aim to design an online algorithm with a CR that is as small as possible.

\noindent\textbf{Assumptions.}
We make following assumptions through the paper.
\begin{assumption}[Infinitesimal schedule]\label{assumption1}
The rate limit of individual EV charging rate is much smaller than the charging capacity of the station, i.e., $R_{n} \ll C_, \forall n \in \caln$.
\end{assumption}
Practically, a charging station can charge tens of EVs simultaneously; therefore, its capacity is much larger than the charging rate of each EV. This assumption is also common in the literature with competitive analysis, e.g.,~\cite{sun2020competitive, huang2019welfare}. Additionally, assumption~\ref{assumption1} allows us to focus on the nature of our problem with mathematical convenience. 

Next, we define the value density of EV $n$ as its value on per-unit energy, i.e., $v_n/E_n$, which is the payment that each EV is willing to pay for per unit of energy.

\begin{assumption}[Bounded value density]\label{assumption2}
The value densities of EVs are bounded, i.e., $\frac{v_{n}}{E_n} \in [L, U], \forall n\in\caln$, where $L$ and $U$ are the lower and upper bounds for the value density.
\end{assumption}
In our algorithm design and analysis, we assume that $L$ and $U$ are known in advance, e.g., from user surveys or historical behavior. 

\begin{assumption}[Bounded availability]\label{assumption3}
The available window of EV charging is bounded, i.e., $d_{n} - a_{n} \in [D^{\min},  D^{\max}], \forall n \in \caln$.
\end{assumption}
In practice, it is reasonable to have a lower bound for the EV available window since the EV cannot stay less than a minimum duration needed to get charged up to their demand given the charging rate limit. Also, each EV's availability time typically has an upper bound that is imposed by the station. 

\begin{assumption}[Bounded electricity Price]\label{assumption4}
The electricity price $\{p_t\}_{t\in\calt}$ is bounded, i.e., $0 \le p_{t} \le p^{\max} \le L, \forall t \in \calt$.
\end{assumption}

We assume the value density is greater than the maximum electricity price, i.e., $v_n/E_n > p^{\max}, \forall n \in \mathcal{N}$. 
Thus, EVs can be charged at the stations with non-negative utility surplus.

\section{Algorithm and Results} \label{sec:alg&res}
In this section, we propose an online posted-pricing algorithm (\opa) that can jointly determine the posted price and the corresponding charging schedule in EV charging networks.
\subsection{An Online Posted-Pricing Algorithm}
\begin{algorithm}[!t]
	\caption{Online Posted-Pricing Algorithm ($\opa(\phi)$)}
	\label{alg:ppa}
	\begin{algorithmic}[1]
		\STATE \textbf{Inputs:} setup information $\{C, \{p_t\}_{t\in\calt}\}$, pricing function $\phi = \{\phi_t(\cdot)\}_{t\in\calt}$; 
		\STATE \textbf{Initialization:} capacity utilization $w_{t}^{(0)} = 0,\forall t\in\calt$; 
		\WHILE{a new EV $n$ arrives with $ \{a_n,d_n,E_n,R_{n}\}$}
		\STATE solve a candidate charging schedule $\{\hat{y}_{nt}\}_{t\in\calt_{n}}$ from
		\begin{subequations}\label{p:cost-minimization}
		\begin{align}
		\min_{y_{nt}\ge0} \quad & \sum\nolimits_{t \in \calt_{n}} \int_{w_{t}^{(n-1)}}^{w_{t}^{(n-1)}+y_{nt}}\phi_{t}(u)du\\
		\label{p:cost-minimization-marginal-cost}
		{\rm s.t.} \quad & \sum\nolimits_{t \in \calt_{n}} y_{nt} \ge E_n, \quad (\hat{\mu}_n)\\
		&y_{nt} \le R_{n}, \forall t\in\calt_{n}. \quad (\hat{\sigma}_{nt})
		\end{align}
		\end{subequations}
        \STATE set the posted price $\xi_{n} = \sum_{t\in\calt_{n}}\phi_{t}\left(w_{t}^{(n-1)}+\hat{y}_{nt}\right)\hat{y}_{nt}$;
        \IF{EV $n$ accepts the charging}
		\STATE 
		set $\hat{x}_{n} = 1$ and charge EV $n$ by $\{\hat{y}_{nt}\}_{t\in\calt_{n}}$;
		\ELSE
		\STATE set $\hat{x}_{n} = 0$, $\hat{y}_{nt} = 0, \forall t\in\calt_{n}$;
		\ENDIF
		\STATE update utilization $w_{t}^{(n)}=\begin{cases}
		w_{t}^{(n-1)} + \hat{y}_{nt} & t\in\calt_{n}\\
		w_{t}^{(n-1)} & t\in \calt\setminus \calt_{n}
		\end{cases}$.
		\ENDWHILE
	\end{algorithmic}
\end{algorithm}

When a new EV $n$ arrives, $\opa(\phi)$ uses a predetermined pricing function $\phi$ to estimate the charging cost of the EV and determines a candidate charging schedule $\{\hat{y}_{nt} \}_{t\in\calt_{n}}$ by solving a (pseudo) cost-minimization problem~\eqref{p:cost-minimization}. In this problem, $\phi_t(u)du$ estimates the (pseudo) cost of charging $du$ unit of energy when the utilization of the station is $u$ in time slot $t$.
Then the objective of the problem~\eqref{p:cost-minimization} is to minimize the total (pseudo) cost of satisfying the EV's request.
Then the algorithm sets the posted price as the total estimated cost of charging EV $n$, i.e., $\xi_{n} = \sum_{t\in\calt_{n}}\phi_{t}(w_{t}^{(n-1)}+\hat{y}_{nt})\hat{y}_{nt}$, where $w_{t}^{(n-1)}$ is the charging station's utilization in slot $t$ when a new EV $n$ arrives.
If EV $n$'s value is no smaller than the posted price, the offer will be accepted, and the station charges the EV based on the candidate schedule. 
Finally, the charging station's utilization is updated and used for estimating the charging cost for next EV.

The performance of $\opa(\phi)$ is determined by the pricing function. In this paper, we consider the pricing function in the following form: 
\begin{align} \label{eq:pricing-function}
    \phi_{t}(w) =
    \begin{cases}
    L, & w \in [0, \beta)\\
    \varphi(w), & w \in [\beta, C]\\
    +\infty, & w\in (C,+\infty)
    \end{cases}, \forall t\in\calt.
\end{align}
This function consists of three segments. The first flat segment ($w\in[0,\beta)$) is set to the lower bound of EVs' value density. Then if EV $n$ is among the first few EVs that come to the station, it will receive a posted price $E_n L$, which the lowest possible price, and thus will accept the offer. This design ensures that the station can at least admit some EVs to secure certain profits.
The second segment $\varphi(w)$ is a non-decreasing function to realize an intuitive idea that the station becomes more selective to admit EVs as its utilization increases.
The last segment sets the price to $\infty$ when the utilization exceeds the capacity. In this way, any charging schedule that exceeds the capacity constraint leads to an infinitely large posted price (which can be set to a cap price in practice), preventing capacity violations. 

\subsection{Main Results}
Let $\CR(\phi)$ denote the CR of \opa with pricing function $\phi$.
In this section, we present the main theoretical results of this paper, which show how to design the pricing function $\phi$ such that \opa can achieve the order-optimal CR for the joint pricing and scheduling problem.
\begin{theorem} \label{theo:thresh_func}
Under Assumptions~\ref{assumption1}-\ref{assumption4}, when the parameter satisfies ${(U/L)(D^{\max}/D^{\min}) \ge 2}$ and the pricing function is given by $\phi^*=\{\phi_t^*(\cdot)\}_{t\in\calt}$, where for all $t\in\calt$,
\begin{equation} \label{eq:thresh_func}
    \phi_{t}(w) =
    \begin{cases}
    L, & w \in [0, C/\alpha)\\
    \frac{L - p_{t}}{e} e^{\frac{\alpha}{C} w} + p_{t}, & w \in [C/\alpha, C]\\
    +\infty, & w\in (C,+\infty)
    \end{cases}, 
\end{equation}
with $\alpha = 1 + 2\ln(\theta)$ and $\theta = \frac{U D^{\max}/D^{\min}}{L - p^{\max}}$, 
then the competitive ratio of \opa is $\CR(\phi^*) = O(\ln(\theta))$.
\end{theorem}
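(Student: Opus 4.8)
The plan is to run an online primal--dual argument on the linear programming relaxation of~\eqref{p:social-welfare}, where the integrality in~\eqref{cnst:admission} is relaxed to $x_n\in[0,1]$. First I would write the dual: a variable $\mu_n$ for the demand constraint~\eqref{cnst:full_demand}, a variable $\lambda_t$ for each capacity constraint~\eqref{cnst:capacity}, a variable $\sigma_{nt}$ for each rate constraint~\eqref{cnst:rate}, and a variable $\rho_n$ for the relaxed box. The dual objective is $\sum_{t\in\calt}C\lambda_t+\sum_{n\in\caln}\rho_n$, and the two nontrivial dual constraints read $E_n\mu_n-R_n\sum_{t\in\calt_n}\sigma_{nt}+\rho_n\ge v_n$ and $\mu_n\le\lambda_t+\sigma_{nt}+p_t$. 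The crucial observation is that the per-EV cost-minimization~\eqref{p:cost-minimization} already produces its own optimal multipliers $\hat\mu_n,\hat\sigma_{nt}$: KKT stationarity gives $\phi_t\!\left(w_t^{(n-1)}+\hat y_{nt}\right)=\hat\mu_n-\hat\sigma_{nt}$ on every served slot, so the posted price telescopes, via complementary slackness on the demand and rate constraints, to $\xi_n=\sum_{t\in\calt_n}\phi_t(w_t^{(n-1)}+\hat y_{nt})\hat y_{nt}=E_n\hat\mu_n-R_n\sum_{t\in\calt_n}\hat\sigma_{nt}$.

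This identity is what lets me read off a feasible online dual. I would set $\mu_n=\hat\mu_n$, $\sigma_{nt}=\hat\sigma_{nt}$, $\rho_n=(v_n-\xi_n)^+$, and, after processing EV $n$, refresh each $\lambda_t$ to $\phi_t(w_t^{(n)})-p_t$. With these choices the $x_n$-constraint becomes $\max(\xi_n,v_n)\ge v_n$, which holds trivially and is tight exactly when EV $n$ is admitted; the $y_{nt}$-constraint reduces to $\lambda_t\ge\phi_t(w_t^{(n)})-p_t$, which holds because $\phi_t$ is nondecreasing and the utilization only grows. It then remains to compare, arrival by arrival, the increment $\Delta D_n=\sum_{t\in\calt_n}C\,\Delta\lambda_t+\rho_n$ of the dual objective against the increment $\Delta P_n=v_n-\sum_{t\in\calt_n}p_t\hat y_{nt}$ of the algorithm's social welfare (both vanish for a rejected EV), so that summing and invoking weak duality yields $\opt\le\sum_n\Delta D_n\le\CR\cdot\alg$.

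The heart of the argument is the differential inequality that forces $\Delta D_n\le\CR\cdot\Delta P_n$. Invoking Assumption~\ref{assumption1}, each $\hat y_{nt}\ll C$, so $C\,\Delta\lambda_t\approx C\phi_t'(w_t^{(n-1)})\hat y_{nt}$, and the exponential segment of~\eqref{eq:thresh_func} is precisely the solution of the ODE $\phi_t'(w)=\tfrac{\alpha}{C}\big(\phi_t(w)-p_t\big)$ with boundary value $\phi_t(C/\alpha)=L$ gluing continuously to the flat segment. Substituting this ODE converts the per-slot dual increment into $\alpha\,(\phi_t(w_t^{(n-1)})-p_t)\,\hat y_{nt}$, i.e. $\alpha$ times the marginal profit the algorithm banks on that slot; balancing this against the admitted surplus $\rho_n$ and the baseline contribution of the flat segment, then optimizing the free constant, pins down $\alpha=1+2\ln\theta$ and produces the claimed $\CR(\phi^*)=O(\ln\theta)$. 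The hypothesis $(U/L)(D^{\max}/D^{\min})\ge2$ together with $p^{\max}\le L$ guarantees $\theta\ge2$, so $\ln\theta>0$ and the calibration is well defined.

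Finally I would split the instances into the two regimes flagged in the introduction and bound each separately, taking the larger ratio. In the \emph{capacity-free} regime (no slot ever reaches $C$, Section~\ref{proof:cap_free_comp_ratio}) the admission threshold stays below $\phi_t(C)$ and the loss is controlled purely by the value-density spread and the window ratio $U D^{\max}/(L D^{\min})$; in the \emph{capacity-limited} regime (some slot saturates, Section~\ref{proof:cap_lim_comp_ratio}) the steep exponential ramp toward $\phi_t(C)$ guarantees the scarce capacity is spent on high-density EVs, which is where the $\ln\theta$ factor is actually generated. I expect the main obstacle to be the scheduling flexibility combined with integral admission: because \opt may serve an EV in entirely different slots than \opa, no slot-by-slot comparison is available, so I must use the single water level $\hat\mu_n$ as a slot-independent price certificate and argue, through~\eqref{cnst:full_demand} and~\eqref{cnst:rate}, that the per-EV dual bound is valid no matter which slots each side uses; bounding the discretization error from the last integrally admitted EV is exactly what forces the separate two-class treatment rather than the cleaner fractional analysis of~\cite{sun2020competitive}.
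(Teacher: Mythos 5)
Your capacity-free half is essentially the paper's own argument: the same LP relaxation, the same dual variables (your $\rho_n$ is the paper's $\eta_n$), the same construction $\lambda_t=\phi_t(w_t^{(\cdot)})-p_t$, $\mu_n=\hat\mu_n$, $\sigma_{nt}=\hat\sigma_{nt}$ read off from the KKT conditions of~\eqref{p:cost-minimization}, the same telescoped identity $\xi_n=E_n\hat\mu_n-R_n\sum_t\hat\sigma_{nt}$, and the same ODE $C\varphi_t'=\alpha(\varphi_t-p_t)$ obtained by binding the sufficient condition~\eqref{eq:thresh_ineq1}. One bookkeeping point you skip: with $\phi_t(0)=L$ your dual starts at $D_0=\sum_t C(L-p_t)>0$ while $P_0=0$, so the purely incremental comparison $\Delta D_n\le\alpha\,\Delta P_n$ cannot be summed from $n=0$. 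The paper absorbs this startup mass with a separate \emph{initial inequality} $P_k\ge D_k/\alpha$ over the first $k$ EVs (all admitted at price $LE_n\le v_n$ until some slot reaches $\beta$), which is exactly where the constraint $\beta\ge C/\alpha$ enters; without that step your weak-duality chain $\opt\le D_N\le\alpha P_N$ does not close.

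The genuine gap is the capacity-limited case, for which you offer a heuristic sentence rather than an argument, and your proposed route cannot work. When some EV's every feasible schedule would push a slot past $C$, the posted price is $+\infty$: problem~\eqref{p:cost-minimization} then has no finite optimum, there is no finite water level $\hat\mu_n$, and dual feasibility of~\eqref{eq:dual-const2} would force $\lambda_t$ or $\sigma_{nt}$ to blow up --- so no ``slot-independent price certificate'' exists for these EVs, which \opt may nevertheless admit and collect up to $UE_n$ from. This is why the paper abandons primal-dual entirely here (the split is caused by the discontinuity of $\phi_t$ at $w=C$ destroying convexity, not by ``discretization error from the last admitted EV''). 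The paper's actual argument is combinatorial: partition $\calt$ into blocks of length $D^{\max}$, charge $\opt(\cali^h)\le 2D^{\max}CU$ trivially, lower-bound $\alg(\tilde\cali^h)\ge\frac{1}{\alpha}\sum_{t\in\hat\calt^h}\phi_t(w_t^{(N)})C$ (Proposition~\ref{prepos1}), identify the worst-case final utilization as two groups of duration-$D^{\min}$, rate-saturated EVs meeting at a single saturated slot with $k_1+k_2=C$, apply Jensen to get $\psi(k_1)+\psi(k_2)\ge2\psi(C/2)$, and pay a factor $3$ for the overlapping three-block decomposition; the boundary condition $\varphi_t(C)\ge UD^{\max}/D^{\min}$ (which your ODE calibration must also impose, and which you never use) is what makes the single saturated slot's contribution $\psi(C)$ large enough to compete with $U$. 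None of these ingredients appears in your sketch, so the $O(\ln\theta)$ bound is established for only one of the two instance classes.
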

This theorem provides the design of the pricing function for \opa and the corresponding CR. Note that the technical assumption ${(U/L)(D^{\max}/D^{\min}) \ge 2}$ states that multiplication of the value and duration fluctuation ratios should be larger than 2.
We postpone the proof of Theorem~\ref{theo:thresh_func} to Section~\ref{sec:comp_analysis}.

\begin{theorem}\label{theo:thresh_lower_bound}
There is no online algorithm that can achieve a competitive ratio smaller than $\Omega(\ln(\theta))$ for our problem.
\end{theorem}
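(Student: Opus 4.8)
The plan is to prove the bound by the standard adversary method for online resource-allocation problems: I will exhibit a family of request sequences on which \emph{every} online algorithm is forced into an irrevocable allocation that the offline optimum can avoid, and then lower bound the resulting gap. To handle the integral admission variable $x_n\in\{0,1\}$, I will invoke Assumption~\ref{assumption1} ($R_n\ll C$): replacing each ``heavy'' request by a large batch of identical infinitesimal requests makes integral admission coincide, up to a vanishing error, with fractional admission, so it suffices to defeat algorithms in the continuous relaxation. Randomization is dispatched by Yao's principle, placing a distribution over the hard instances and showing every deterministic algorithm has expected ratio $\Omega(\ln\theta)$. Rather than build one monolithic instance, I will construct two orthogonal hard families, one isolating the value/cost fluctuation and one isolating the duration fluctuation, and combine them at the end.

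\textbf{Value--cost family} (giving $\Omega(\ln(U/(L-p^{\max})))$). Here all windows have equal length, so scheduling is inactive and the problem degenerates to an online knapsack / one-way-trading instance on the single resource ``capacity.'' I will release requests in $k=\Theta(\ln\theta)$ geometrically spaced value-density levels $L=\rho_0<\rho_1<\dots<\rho_k=U$, each level carrying enough infinitesimal demand to fill $C$, and let the adversary stop after an arbitrary level $i$. The offline optimum fills $C$ with the top level present, while a fixed online algorithm commits a predetermined amount to each level as it appears. Requiring competitiveness against every stopping point $i$ yields the usual system of inequalities, whose analysis forces the ratio to be at least $1+\ln(\rho_k/\rho_0)$. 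To obtain the denominator $L-p^{\max}$ I use the electricity cost as a gadget: pricing the lowest-density slots at $p^{\max}$ and the highest-density slots at $0$, so the span of \emph{net} value density is $[L-p^{\max},\,U]$ and the bound becomes $\Omega(\ln(U/(L-p^{\max})))$.

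\textbf{Duration family} (giving $\Omega(\ln(D^{\max}/D^{\min}))$). Now I fix the value density and exploit that committing capacity in a slot is irrevocable while the offline scheduler enjoys full windows. Using $\Theta(\ln(D^{\max}/D^{\min}))$ geometrically spaced window lengths, the adversary first reveals long-window requests that any competitive algorithm must partly admit and schedule across their windows, thereby reserving capacity in slots that will later be contested; it then reveals short-window requests confined to those contested slots. The offline optimum, knowing the future, pushes the long requests' energy to slots outside the contested region (their long windows make this feasible) and serves the short requests too, whereas the online algorithm cannot retract its earlier commitments. Chaining this across the geometric window scale forces an $\Omega(\ln(D^{\max}/D^{\min}))$ gap by an argument structurally identical to the one-way-trading lower bound, now carried on the ``capacity-time'' resource.

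Finally I combine the two bounds. Since each family is a legitimate instance of our problem, the competitive ratio is at least the maximum of the two, and using $\max\{a,b\}\ge\tfrac12(a+b)$ together with the identity $\ln(U/(L-p^{\max}))+\ln(D^{\max}/D^{\min})=\ln\theta$ gives $\comr\ge\Omega(\ln\theta)$. I expect the duration family to be the main obstacle: I must verify that the offline rescheduling genuinely recovers the full value (so the gap is real and not an artifact of infeasibility), that the rate limits $R_n$ and capacity $C$ are simultaneously respected by both the adversary's reservations and the offline solution, and that the geometric nesting produces additive logs rather than a single $\max$. Formalizing this ``capacity-time'' reduction so that it matches a known one-way-trading lower bound, while keeping every intermediate schedule feasible under Constraints~\eqref{cnst:full_demand}--\eqref{cnst:rate}, is the delicate part; the value--cost family, by contrast, is a routine adaptation of the classical online-knapsack lower bound.
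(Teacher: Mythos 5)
Your proposal follows essentially the same route as the paper: decompose the hardness into a value-fluctuation family (the online knapsack lower bound of $\Omega(\ln(U/L))$) and a duration-fluctuation family (the online interval scheduling lower bound of $\Omega(\ln(D^{\max}/D^{\min}))$), then combine them via $\max\{a,b\}\ge\tfrac12(a+b)$ to get $\Omega(\ln\theta)$. The only difference is that the paper simply cites the known lower bounds for these two special cases (\cite{zhou2008budget} and \cite{goyal2020online}) rather than re-deriving the adversarial constructions, and it sets $p^{\max}=0$ in the hard instances instead of using your cost gadget.
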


\begin{proof} \label{app:lower-bound}
To prove we consider two special cases of our problem. First, the basic online knapsack problem~\cite{zhou2008budget} is a special case of our problem with no energy price, setting the departure time of each EV $n$ to $T$, the demand of $(T - a_n)R_n$, which correspond to $R_n$ charging demand in each slot.  Define $\rho = \frac{U}{L}$ as the fluctuation ratio, and the lower bound of online knapsack is $\Omega(\ln(\rho))$~\cite{zhou2008budget}.

Secondly, with the special case of homogeneous values, i.e., $v_n =v,\ \forall n$, no energy cost, i.e., $p_t =0, \ \forall t$, and different duration of EVs, our problem degenerates to the online interval scheduling problem~\cite{goyal2020online}. Let $\delta = D^{\max}/D^{\min}$ be the duration ratio. It is known that the lower bound on the competitive ratio for any online algorithm for the online interval scheduling is $\Omega(\ln(\delta))$~\cite{goyal2020online}. Now, with two special cases of the online knapsack problem and the online interval scheduling, the lower bound of any online algorithms for problem~\eqref{p:social-welfare} is
    \begin{align}
        \texttt{CR(OPA)} &\ge  \max \{\Omega(\ln(\delta)), \Omega(\ln(\rho))\} \ge \frac{1}{2}\Omega(\ln(\delta)) + \frac{1}{2}\Omega(\ln(\rho)) \notag \\
        &\ge \frac{1}{2} \Omega(\ln(\delta \rho)) = \Omega(\ln(\theta)), \notag
    \end{align}
where $\theta = \frac{U D^{\max}/D^{\min}}{L - p^{\max}}$ and with $p^{\max} = 0$, we have $\theta = \delta\rho$.
\end{proof}

Theorem~\ref{theo:thresh_lower_bound} gives a lower bound of the CR for the joint pricing and scheduling problem. Combining Theorem~\ref{theo:thresh_func} and Theorem~\ref{theo:thresh_lower_bound} concludes that our proposed \opa can achieve an order-optimal CR.

\section{Competitive Analysis} \label{sec:comp_analysis}
This section presents the proof of Theorem~\ref{theo:thresh_func}, and simultaneously shows how to design the threshold function $\phi^*$ for \opa to achieve the order-optimal CR.

\subsection{Roadmap for the Proof}
\label{sec:proof-roadmap}

Recall that in \opa, the pricing function becomes infinitely large once the utilization exceeds the capacity of the charging station, i.e., $\phi_t(w_t^{(n-1)}+\hy_{nt}) = +\infty, \forall t\in\calt, n\in\caln$ if $w_t^{(n-1)}+\hy_{nt} > C$.
Thus, if any EV's charging candidate schedule exceeds the station capacity, \opa sets the posted price to $+\infty$ 
to reject this request.
However, the discontinuity of the pricing function, $\phi_t(w): \mathbb{R}^+\to\mathbb{R}^+$ at $w = C$, results in difficulties in the competitive analysis since the cost minimization problem~\eqref{p:cost-minimization} may not be a convex problem, prohibiting from directly applying existing approaches (e.g., primal-dual based analysis).
Thus, we divide the analysis of \opa into to two cases.
Let $\Psi$ denote the set of instances that satisfy Assumptions~\ref{assumption1}-\ref{assumption4} and divide $\Psi$ into two disjoint subsets $\Psi^1$ and $\Psi^2$. $\Psi^1$ includes the instances, under which \opa does not output a candidate schedule that violates the capacity. Then the utilization of the charging station under $\Psi^1$ is away from the capacity $C$ and hence we call $\Psi^1$ \textit{capacity-free instances}.
$\Psi^2 = \Psi \setminus \Psi^1$ is then called \textit{capacity-limited instances} because there exists at least one time slot when the capacity can be violated by one candidate schedule. 

In the proof of Theorem~\ref{theo:thresh_func}, we first analyze the CR of $\opa(\phi^*)$ under capacity-free instances $\Psi^1$ in Section~\ref{proof:cap_free_comp_ratio}. In this case, the cost minimization problem~\eqref{p:cost-minimization} is a convex problem and we analyze the CR of \opa based on online primal-dual (\opd) framework~\cite{buchbinder2009design,tan2020mechanism}. In fact, we can provide a constructive proof, which not only proves CR, but also shows how to design the pricing function $\phi^*$.  
\begin{lemma}\label{lem:cap_free_comp_ratio}
Given pricing function $\phi^*$ in Equation~\eqref{eq:thresh_func}, the CR of $\opa(\phi^*)$ under capacity-free instances $\Psi^1$ is $\frac{\opt(\cali)}{\alg(\cali)} \le \alpha = 1 + 2\ln(\theta), \forall \cali\in\Psi^1$.
\end{lemma}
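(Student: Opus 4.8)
The plan is to prove Lemma~\ref{lem:cap_free_comp_ratio} through the online primal--dual (\opd) framework, exploiting the fact that on capacity-free instances the utilization stays strictly below $C$, so the cost-minimization problem~\eqref{p:cost-minimization} is a convex program with well-defined KKT conditions. First I would relax the integrality constraint~\eqref{cnst:admission} to $x_n\in[0,1]$ (the infinitesimal Assumption~\ref{assumption1} makes this relaxation essentially tight on $\Psi^1$) and write the dual of the resulting LP, introducing dual variables $\mu_n$ for the demand constraint~\eqref{cnst:full_demand}, $\lambda_t$ for the capacity constraint~\eqref{cnst:capacity}, $\sigma_{nt}$ for the rate constraint~\eqref{cnst:rate}, and $\eta_n$ for $x_n\le 1$. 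The dual objective is $C\sum_{t\in\calt}\lambda_t+\sum_{n\in\caln}\eta_n$, with coupling constraints $\mu_n\le \lambda_t+\sigma_{nt}+p_t$ (from $y_{nt}$) and $E_n\mu_n-R_n\sum_{t\in\calt_n}\sigma_{nt}+\eta_n\ge v_n$ (from $x_n$).

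Next I would read off a dual-feasible assignment directly from the execution of \opa. For each arriving EV $n$, the cost-minimization~\eqref{p:cost-minimization} returns optimal multipliers $\hat\mu_n$ (the water-filling level) and $\hat\sigma_{nt}$; I set $\mu_n=\hat\mu_n$, $\sigma_{nt}=\hat\sigma_{nt}$, $\eta_n=(v_n-\xi_n)^+$, and $\lambda_t=\phi^*_t(w_t)-p_t$ evaluated at the \emph{final} utilization $w_t$. Feasibility of the $x_n$-constraint is immediate: KKT stationarity of~\eqref{p:cost-minimization} gives $\xi_n=\hat\mu_n E_n-R_n\sum_t\hat\sigma_{nt}$, so the constraint holds with equality when $n$ is admitted and strictly (via $\xi_n>v_n$) when $n$ is rejected. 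Feasibility of the $y_{nt}$-constraint is where monotonicity enters: KKT gives $\hat\mu_n-\hat\sigma_{nt}=\phi^*_t(w_t^{(n)})$ on used slots and $\hat\mu_n\le\phi^*_t(w_t^{(n)})$ on unused ones, and since $\phi^*_t$ is non-decreasing while utilization only grows, $\lambda_t\ge\phi^*_t(w_t^{(n)})-p_t\ge\hat\mu_n-\hat\sigma_{nt}-p_t$ for every $n$; moreover $\lambda_t\ge L-p_t\ge 0$ because $\phi^*_t\ge L\ge p^{\max}$.

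The constructive heart of the argument is the per-EV invariant $\Delta D_n\le\alpha\,\Delta P_n$, where $\Delta P_n=v_n-\sum_{t\in\calt_n}p_t\hat y_{nt}$ is the welfare added by admitting $n$ and $\Delta D_n=C\sum_t\big(\phi^*_t(w_t^{(n)})-\phi^*_t(w_t^{(n-1)})\big)+\eta_n$ is the matching dual increment. Using Assumption~\ref{assumption1} to treat each increment as infinitesimal, $C\,\Delta\lambda_t\approx C\,\phi^{*\prime}_t(w_t)\hat y_{nt}$, so if I \emph{impose} that the exponential segment satisfy the ODE $\varphi'(w)=\tfrac{\alpha}{C}(\varphi(w)-p_t)$, the capacity term collapses to $\alpha\big(\xi_n-\sum_t p_t\hat y_{nt}\big)$ (using the definitional identity $\sum_t\phi^*_t(w_t^{(n)})\hat y_{nt}=\xi_n$, which holds even on rate-limited slots), and the invariant reduces after cancellation to exactly the acceptance test $\xi_n\le v_n$ that \opa enforces, since $\alpha\Delta P_n-\Delta D_n=(\alpha-1)(v_n-\xi_n)$. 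Solving this ODE with the continuity boundary condition $\varphi(C/\alpha)=L$ reproduces the second branch of~\eqref{eq:thresh_func}, which is precisely how the pricing function is \emph{designed}; the flat branch makes the first arrivals admissible at density $L$ so the invariant holds trivially there ($\phi^{*\prime}=0$ and $\Delta P_n\ge\Delta D_n=\eta_n$), and requiring $\varphi(C)$ to dominate the largest effective value density pins down $\alpha=1+2\ln\theta$. Summing over all EVs and invoking weak duality yields $\opt(\cali)\le C\sum_t\lambda_t+\sum_n\eta_n\le\alpha\,\alg(\cali)$.

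I expect the main obstacle to be the bookkeeping around the baseline dual mass $C\sum_t(L-p_t)$ that feasibility forces on the flat segment: the incremental invariant only controls the \emph{growth} of the dual, so I must argue separately that this initial term is charged against genuine welfare---either because any slot forcing $\lambda_t\ge L-p_t$ is itself filled to level at least $C/\alpha$, making its flat-region welfare an $\alpha$-factor witness, or because in the complementary low-utilization regime \opa admits every request and already matches \opt. A secondary point is justifying the integral-to-fractional reduction on $\Psi^1$ via Assumption~\ref{assumption1}, together with the careful treatment of rate-limited slots, where $\hat\sigma_{nt}>0$ makes the identity $\xi_n=\hat\mu_n E_n-R_n\sum_t\hat\sigma_{nt}$ and hence the $x_n$-dual feasibility check more delicate. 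Finally, I would verify that the duration ratio $D^{\max}/D^{\min}$---the scheduling-specific ingredient absent from the fixed-demand analysis of~\cite{sun2022online}---enters $\theta$ precisely through the worst-case water level of short-window requests, which is what forces $\varphi(C)$, and thus $\alpha$, to scale with $\ln\theta$.
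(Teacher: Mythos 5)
Your proposal follows essentially the same route as the paper: an online primal--dual argument in which the dual multipliers are read off from the KKT conditions of the cost-minimization subproblem~\eqref{p:cost-minimization}, the incremental inequality $D_n-D_{n-1}\le\alpha(P_n-P_{n-1})$ is forced by the differential inequality $\alpha\varphi_t(w)-C\varphi_t'(w)\ge\alpha p_t$, and the boundary conditions $\varphi_t(C/\alpha)=L$, $\varphi_t(C)\ge UD^{\max}/D^{\min}$ pin down $\alpha=1+2\ln\theta$; this is exactly Lemma~\ref{eq:threshfunc1} and its instantiation. The one place you deviate is the assignment $\lambda_t=\phi^*_t(w_t^{(N)})-p_t$ on \emph{every} slot, and the ``baseline dual mass'' $C\sum_{t}(L-p_t)$ you flag as the main obstacle is a real problem with that choice: it scales with the horizon $T$ independently of the instance, and a slot with small positive utilization below $C/\alpha$ collects welfare only $(L-p_t)w_t^{(N)}<(L-p_t)C/\alpha$, so neither branch of your proposed dichotomy closes it (an instance can mix saturated and barely-used slots, so the whole-instance ``low-utilization regime'' case does not apply slot by slot). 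The paper's resolution is precisely the fix implicit in your first branch, made exact: it multiplies $\lambda_t$ by the indicator $\mathbb{I}\{w_t^{(N)}\ge\beta\}$ in~\eqref{eq:dual-la}, so only slots that actually reach $\beta=C/\alpha$ carry dual mass, and it charges that mass via a separate \emph{initial inequality} $P_k\ge\frac{1}{\alpha}D_k$ evaluated at the first EV $k$ after which some slot's utilization hits $\beta$ (all EVs up to $k$ are admitted at price $LE_n\le v_n$, so the flat-region welfare $(L-p_t)\beta=(L-p_t)C/\alpha$ per saturated slot is exactly the $\alpha$-factor witness you anticipated). With that indicator in place your remaining steps, including the identity $\alpha\Delta P_n-\Delta D_n=(\alpha-1)(v_n-\xi_n)$, match the paper's computation.
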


Given the pricing function $\phi^*$ obtained in the capacity-free case, we then analyze the CR of $\opa(\phi^*)$ under the capacity-limited instances $\Psi^2$. Since we cannot rely on the convexity of the problem~\eqref{p:cost-minimization} in this case, we derive the CR by figuring out the worst-case instances in different scenarios and finally obtain the following upper bound for the CR.

\begin{lemma}\label{lem:cap_lim_comp_ratio}
Given pricing function $\phi^*$ in Equation~\eqref{eq:thresh_func}, the competitive ratio of $\opa(\phi^*)$ under capacity-limited instances $\Psi^2$ is $\frac{\opt(\cali)}{\alg(\cali)} \le 3\max\left\{2\sqrt{e}, \frac{\theta\alpha}{\exp(\alpha/2 - 1)}\right\} = 3\sqrt{e}\max\left\{2, 1+ 2\ln\theta \right\}, \forall \cali\in\Psi^2$.
\end{lemma}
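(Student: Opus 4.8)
The obstacle that forces a separate argument here is exactly the one flagged in the roadmap: on $\Psi^2$ the barrier at $w=C$ in~\eqref{eq:thresh_func} makes the candidate cost-minimization~\eqref{p:cost-minimization} non-convex, so the primal--dual certificate behind Lemma~\ref{lem:cap_free_comp_ratio} is no longer available. My plan is therefore to bound $\alg(\cali)$ from below and $\opt(\cali)$ from above by direct structural arguments, invoking the capacity-free ratio $\alpha$ only as a black box on the portion of the instance where the capacity does not bind.

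First I would lower-bound $\alg(\cali)$. By definition of $\Psi^2$ there is a first EV whose candidate schedule would push some slot past $C$; I freeze the utilization profile $\{w_t\}$ at the state just before that EV. Every EV accepted up to that point pays at least its posted price $\xi_n=\sum_t\phi_t^*(w_t^{(n-1)}+\hat{y}_{nt})\hat{y}_{nt}$, so the social welfare already banked by \opa is at least $\sum_t\int_0^{w_t}\phi_t^*(u)\,du-\sum_t p_t w_t$, i.e. the integrated pricing curve net of energy cost. Because at least one slot has been driven essentially to $C$ and $\phi_t^*$ is the increasing exponential of~\eqref{eq:thresh_func} on $[C/\alpha,C]$, this integral grows like $e^{\alpha}$ in that slot; this exponential growth is exactly where the shape of~\eqref{eq:thresh_func} pays off.

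Next I would upper-bound $\opt(\cali)$. Since every value density is at most $U$ and constraint~\eqref{cnst:capacity} caps the delivered energy per slot by $C$, the offline welfare is at most $U$ times the total energy $\opt$ can feasibly schedule, minus its energy cost. The delicate point is the scheduling dimension: an EV with a long window $D^{\max}$ can have its demand spread across many slots, so I would split $\opt$'s energy into the part it places in slots that \opa had already saturated and the part in slots it had not. On the unsaturated slots I can invoke Lemma~\ref{lem:cap_free_comp_ratio} (those behave like a capacity-free sub-instance, giving a factor $\alpha$); on the saturated slots I charge $\opt$'s gain against the integrated pricing curve from the previous step. Combining the two regions, together with a separate accounting of the welfare \opa forfeits by rejecting the capacity-violating request, is what produces the $\max$ over two scenarios and the leading constant $3$: one scenario where $\opt$'s advantage is dominated by admitting many EVs priced on the flat $L$-segment (the $2\sqrt{e}$ branch) and one where it is dominated by the exponential segment (the $\frac{\theta\alpha}{\exp(\alpha/2-1)}=\alpha\sqrt{e}$ branch, after substituting $\alpha=1+2\ln\theta$).

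The main obstacle I anticipate is precisely this interaction between scheduling flexibility and the capacity barrier. In a fixed-demand setting each EV occupies a single slot, so a capacity violation localizes cleanly; here the adversary can use windows of length up to $D^{\max}$ to force \opa to saturate slots ``early'' with energy that $\opt$ would have packed more efficiently across the same window, and the ratio $D^{\max}/D^{\min}$ (hidden inside $\theta$) measures exactly this packing mismatch. Making the worst-case instance explicit --- pinning down how long windows, value densities near $U$, and the $C/\alpha$ threshold conspire to maximize $\opt(\cali)/\alg(\cali)$ --- is the crux, and checking that the two resulting extremal families reproduce the two branches of the stated $\max$ (with the constant $3$ absorbing the triple split into saturated slots, unsaturated slots, and the rejected request) is the step that will require the most care.
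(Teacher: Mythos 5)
Your overall strategy --- lower-bound $\alg$ through the pricing curve, upper-bound $\opt$ by $U$ times the schedulable energy, and split into a flat-segment and an exponential-segment case --- points in the right direction, but two essential ingredients are missing, and without them the plan does not yield the stated bound. First, freezing the state at the \emph{first} capacity-violating request and invoking Lemma~\ref{lem:cap_free_comp_ratio} on the ``unsaturated slots'' is not a well-defined reduction: the primal--dual certificate behind Lemma~\ref{lem:cap_free_comp_ratio} is a property of an entire instance (convexity of~\eqref{p:cost-minimization} for every arriving EV), not of a subset of slots, and an EV's window generally straddles saturated and unsaturated slots. Moreover, \opa keeps running after that first rejection, so $\opt$ can keep accruing value over an arbitrarily long horizon while your ``banked welfare'' bound is static. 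The paper avoids both problems by partitioning time into blocks of length $D^{\max}$ and comparing $\opt(\cali^h)$ against $\alg(\tilde{\cali}^h)$, where $\tilde{\cali}^h$ spans three consecutive blocks; that three-fold overlap --- not your ``triple split into saturated slots, unsaturated slots, and the rejected request'' --- is where the leading constant $3$ comes from.

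Second, and more quantitatively, a lower bound on $\alg$ coming from a single saturated slot is too weak. One slot at utilization $C$ contributes roughly $\frac{C}{\alpha}\phi_t(C)\ge \frac{C}{\alpha}\cdot\frac{UD^{\max}}{D^{\min}}$, and dividing $\opt(\cali^h)\le 2D^{\max}CU$ by this gives only $O(D^{\min}\ln\theta)$: the factor $D^{\min}$ does not cancel. The paper's crucial extra step is a worst-case analysis of the \emph{final utilization profile}: because the candidate schedule is the water-filling solution~\eqref{eq:water-filling} and every EV contributing to the saturated slot has window at least $D^{\min}$, the roughly $2D^{\min}$ slots surrounding a saturated slot carry utilizations $k_1,k_2$ with $k_1+k_2=C$, and by convexity of the pricing function (Jensen) each is effectively at level $C/2$, contributing $(2D^{\min}-2)\psi(C/2)\sim D^{\min}e^{\alpha/2}$ to the lower bound on $\alg(\tilde{\cali}^h)$. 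It is this $e^{\alpha/2}$ from the half-full neighboring slots --- not the $e^{\alpha}$ at the saturated slot itself --- that produces the branch $\theta\alpha/\exp(\alpha/2-1)$ and cancels the $D^{\min}$ hidden in $\theta$; the case distinction $\alpha\le 2$ versus $\alpha>2$ (i.e., whether $C/2$ lies on the flat or the exponential segment) is what generates the two branches of the $\max$. Your sketch correctly identifies the packing mismatch as the crux but does not supply this argument, and without it the claimed constants cannot be recovered.
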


Combining Lemma~\ref{lem:cap_free_comp_ratio} and Lemma ~\ref{lem:cap_lim_comp_ratio} gives
\begin{align*}
    \CR(\alpha^*) &\le \max_{\cali\in\Psi^1\cup\Psi^2}\frac{\opt(\cali)}{\alg(\cali)}\\
    &= \max\left\{\alpha, 6\sqrt{e}, \frac{3\theta\alpha}{\exp(\alpha/2 - 1)}\right\} = O(\ln\theta),
\end{align*}
which completes the proof of Theorem~\ref{theo:thresh_func}.

\subsection{Proof of Lemma~\ref{lem:cap_free_comp_ratio}} \label{proof:cap_free_comp_ratio}

Based on \opd, we can derive the following technical lemma that provides a sufficient condition on the pricing function to ensure the competitiveness of \opa.    
\begin{lemma}
\label{eq:threshfunc1}
Under capacity-free instances, $\texttt{OPA}(\phi)$ is $\alpha$-competitive if the pricing function $\phi = \{\phi_t\}_{t\in\calt}$ is given by, for all $t\in\calt$, 
\begin{align}
    \phi_t(w) =
    \begin{cases}
    L, & w \in [0, \beta),\\
    \varphi_t(w), & w \in [\beta, C],
    \end{cases} 
\end{align}
where $\beta \ge C/\alpha$ is a utilization threshold, and $\varphi_t$ is a non-decreasing function that satisfies:
\begin{align}
    \begin{cases}\label{eq:thresh_ineq1}
    \alpha \varphi_t(w) - C \varphi_t'(w) \geq \alpha p_t, w \in [\beta, C] \\
    \varphi_t(\beta) = L, \varphi_t(C) \ge \frac{U D^{\max}}{D^{\min}}
    \end{cases}.
\end{align}
\end{lemma}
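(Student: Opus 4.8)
The plan is to prove Lemma~\ref{eq:threshfunc1} through the online primal-dual (\opd) framework: from a single run of $\opa(\phi)$ I construct a feasible solution to the dual of the LP relaxation of~\eqref{p:social-welfare}, show that its objective is at most $\alpha\cdot\alg(\cali)$, and then invoke weak duality to conclude $\opt(\cali)\le\alpha\cdot\alg(\cali)$. First I would relax the integrality constraint~\eqref{cnst:admission} to $0\le x_n\le 1$ and dualize, introducing multipliers $\mu_n$ for the demand constraint~\eqref{cnst:full_demand}, $\lambda_t$ for the capacity constraint~\eqref{cnst:capacity}, $\sigma_{nt}$ for the rate constraint~\eqref{cnst:rate}, and $\nu_n$ for $x_n\le 1$. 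The dual minimizes $C\sum_t\lambda_t+\sum_n\nu_n$ subject to $\mu_n\le\lambda_t+\sigma_{nt}+p_t$ for all $t\in\calt_n$ and $\nu_n\ge v_n-\mu_n E_n+R_n\sum_{t\in\calt_n}\sigma_{nt}$, with all variables nonnegative.

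The heart of the argument is the dual assignment. For every EV $n$ I set $\mu_n=\hat{\mu}_n$ and $\sigma_{nt}=\hat{\sigma}_{nt}$, the KKT multipliers of the cost-minimization~\eqref{p:cost-minimization} that $\opa$ already solves upon arrival; by strong duality of~\eqref{p:cost-minimization} and the infinitesimal-rate Assumption~\ref{assumption1} these satisfy $\xi_n=\hat{\mu}_n E_n-R_n\sum_t\hat{\sigma}_{nt}$, so taking $\nu_n=\max\{0,v_n-\xi_n\}$ makes the $\nu$-constraint tight, giving $\nu_n=0$ for every rejected EV and $\nu_n=v_n-\xi_n$ for every admitted one. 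For the capacity multiplier I use the final utilization $w_t^{(N)}$ and set $\lambda_t=\tfrac{\alpha}{C}\int_0^{w_t^{(N)}}(\phi_t(u)-p_t)\,du$. Under Assumption~\ref{assumption1} the admitted EVs' posted prices telescope, so $\alg(\cali)$ equals the total admitted surplus $\sum_n\nu_n$ plus the pseudo-welfare $\sum_t\int_0^{w_t^{(N)}}(\phi_t(u)-p_t)\,du$; hence the dual objective is exactly $\alpha$ times the pseudo-welfare plus $\sum_n\nu_n$, which is at most $\alpha\cdot\alg(\cali)$ since $\alpha\ge1$ and $\nu_n\ge0$.

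It then remains to check dual feasibility, and this is precisely where the hypotheses~\eqref{eq:thresh_ineq1} are consumed. Using the stationarity of~\eqref{p:cost-minimization}, namely $\phi_t(w_t^{(n)})=\hat{\mu}_n-\hat{\sigma}_{nt}$ at each charged slot, together with the monotonicity of $\phi_t$ and $w_t^{(n)}\le w_t^{(N)}$, the constraint $\mu_n\le\lambda_t+\sigma_{nt}+p_t$ reduces to $\lambda_t\ge\varphi_t(w_t^{(N)})-p_t$, i.e. $\tfrac{\alpha}{C}\int_0^{w}(\varphi_t(u)-p_t)\,du\ge\varphi_t(w)-p_t$ on $[\beta,C]$. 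Differentiating yields exactly $\alpha\varphi_t(w)-C\varphi_t'(w)\ge\alpha p_t$; the boundary condition $\varphi_t(\beta)=L$ together with $\beta\ge C/\alpha$ makes the integrated inequality hold at $w=\beta$ (the flat segment alone contributes $\tfrac{\alpha}{C}\beta(L-p_t)\ge L-p_t$), and the differential inequality then propagates feasibility across the entire interval $[\beta,C]$. The condition $\varphi_t(C)\ge UD^{\max}/D^{\min}$ is what lets the $\nu$-constraint be satisfied with $\nu_n\ge0$ for the highest value-density EVs even when their demand must be spread over a window of length up to $D^{\max}$, which is the genuinely new ingredient forced by scheduling.

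I expect the main obstacle to be the slots whose final utilization remains below $C/\alpha$, where $\phi_t\equiv L$ and the averaged multiplier $\lambda_t=\tfrac{\alpha}{C}w_t^{(N)}(L-p_t)$ falls short of the marginal level $L-p_t$ that feasibility demands. Resolving this forces me to exploit the role of the flat segment: while such a slot is under-utilized, every competing EV is offered the minimum price $E_nL\le v_n$ and is therefore admitted by $\opa$, so $\opt$ cannot extract more value from that slot than $\opa$ already does. I would make this precise through a charging argument that reassigns $\opt$'s value on under-utilized slots onto $\alg$'s admitted surplus, reducing to the regime where every relevant slot has utilization at least $C/\alpha$ and the clean dual-feasibility argument above applies verbatim. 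Correctly interleaving this boundary accounting with the scheduling flexibility (several slots per EV, rate caps carried by $\hat{\sigma}_{nt}$) is the delicate step.
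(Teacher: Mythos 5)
Your proposal follows the same online primal--dual skeleton as the paper and uses the identical assignments for $\mu_n$, $\sigma_{nt}$ and the surplus variable ($\nu_n$ here, $\bar\eta_n$ in the paper), but your capacity multiplier is genuinely different: you take the cumulative average $\lambda_t=\tfrac{\alpha}{C}\int_0^{w_t^{(N)}}(\phi_t(u)-p_t)\,du$, whereas the paper sets the pointwise marginal $\bar\lambda_t=[\phi_t(w_t^{(N)})-p_t]\cdot\mathbb{I}\{w_t^{(N)}\ge\beta\}$ and then runs the initial-plus-incremental inequality bookkeeping of Proposition~\ref{lem:local-conditions}. Your variant is attractive where it works: the comparison of the dual objective with $\alpha\cdot\alg$ becomes a one-line telescoping identity, and the differential inequality~\eqref{eq:thresh_ineq1} is consumed cleanly by showing that $F(w)=\tfrac{\alpha}{C}\int_0^w(\phi_t(u)-p_t)\,du-(\varphi_t(w)-p_t)$ is nonnegative at $w=\beta$ (using $\beta\ge C/\alpha$) and nondecreasing on $[\beta,C]$. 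That part is correct.

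The genuine gap is exactly the one you flag and then only sketch: dual feasibility on slots whose final utilization stays below $\beta$. There your $\lambda_t=\tfrac{\alpha}{C}w_t^{(N)}(L-p_t)$ falls strictly short of the required level $L-p_t$, and the patch you propose rests on a false premise. It is not true that every EV competing for an under-utilized slot is offered $E_nL$ and admitted: because of the rate cap $R_n$, the water-filling schedule~\eqref{eq:water-filling} may be forced to place part of EV $n$'s demand on congested slots of its window even while slot $t$ is nearly empty, so the posted price can exceed $v_n$ and the EV is rejected; \opt may then serve up to capacity $C$ on that slot with EVs that \alg turned away, and the ``\opt cannot extract more value from that slot than \opa already does'' claim breaks. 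A careful charging argument interleaving this with multi-slot windows is precisely the content you have not supplied, and it is the heart of the lemma. The paper avoids confronting this head-on by zeroing out $\bar\lambda_t$ on under-utilized slots and absorbing the entire low-utilization phase into the initial inequality $P_k\ge\tfrac{1}{\alpha}D_k$, taken at the first EV $k$ after which some slot reaches $\beta$ (where it uses that the first $k$ EVs are all admitted at price $LE_n\le v_n$); you would need an analogous device, or a full proof of your reassignment argument, to close the proof. A minor additional point: the condition $\varphi_t(C)\ge UD^{\max}/D^{\min}$ does not play the role you assign it (the $\nu$-constraint is satisfiable by taking the positive part regardless); it is there to guarantee that requests forcing utilization past $C$ are priced out, and it is consumed in the capacity-limited analysis of Lemma~\ref{lem:cap_lim_comp_ratio}, not in the feasibility check you describe.
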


Based on Lemma~\ref{eq:threshfunc1}, we can design the pricing function $\phi$ that satisfies the differential equation~\eqref{eq:thresh_ineq1} and minimize the competitive ratio $\alpha$. 
To do so, we bind all inequalities in~\eqref{eq:thresh_ineq1} and solve this boundary value problem. Then we can obtain that $\beta = C/\alpha$ and the pricing function $\phi^*$ is given by equation~\eqref{eq:thresh_func}, which proves Lemma~\ref{lem:cap_free_comp_ratio}.
In what follows, we prove the technical Lemma~\ref{eq:threshfunc1} based on \opd.

Consider the relaxed primal problem~\eqref{p:social-welfare} as follows
\begin{subequations}\label{prelim_relaxed_prim}
    \begin{align}
    \max_{x_n\ge0, y_{nt}\ge 0} \quad& \sum_{n\in\caln}v_nx_n - \sum_{t\in\calt} p_t \sum_{n\in\caln_t}y_{nt}\\
    {\rm s.t.}\quad& \sum_{t\in\calt_n} y_{nt} \ge E_n x_n,\quad \forall n\in\caln, \quad (\mu_n)\\
    & \sum_{n\in\caln_t}y_{nt} \le C, \quad \forall t\in\calt, \quad (\lambda_t)\\\
    & y_{nt} \le R_n x_n, \quad \forall n\in\caln,t\in\calt_n, \quad (\sigma_{nt})\\
    & x_n \le 1, \quad \forall n\in\caln, \quad (\eta_n),
    \end{align}
\end{subequations}
where we relax the binary variable $x_n\in\{0,1\}$ to be continuous, i.e., $x_n \in [0,1]$, and
$\bmu:=\{\mu_n\}_{n\in\caln}$, $\bla:= \{\lambda_t\}_{t\in\calt}$, $\bsigma:=\{\sigma_{nt}\}_{n\in\caln,t\in\calt_n}$, $\bet := \{\eta_n\}_{n\in\caln}$ are the Lagrange multipliers associated with the corresponding constraints.
The dual of the problem~\eqref{prelim_relaxed_prim} is:
\begin{subequations}\label{prelim_dual}
    \begin{align}
    \min_{\bmu,\bla,\bsigma,\bet\ge0} \quad& \sum_{t\in\calt}\lambda_t C +\sum_{n\in\caln}\eta_n\\
    \label{eq:dual-const1}
    {\rm s.t.}\quad& v_n- \mu_n E_n  + \sum_{t\in\calt_n}\sigma_{nt}R_n -\eta_n \le 0, \forall n\in\caln,\\
    \label{eq:dual-const2}
    & \mu_n-\lambda_t - p_t -\sigma_{nt} \le 0,\forall n\in\caln,t\in\calt_n.
    \end{align}
\end{subequations}
The key idea of the \opd approach is to construct a feasible dual solution based on the online solution produced by an online algorithm, and then build the upper bound of the offline optimum using the feasible dual objective based on weak duality. 
Given an arrival instance $\cali$, we denote the online solution by $\hat{X}: = (\hat\bx,\hat\by)$ and the constructed dual variable by $\bar{\Lambda} := (\bar\bmu,\bar\bla,\bar\bsigma,\bar\bet)$. Also let $P_n$ and $D_n$ denote the objective values of the primal problem~\eqref{prelim_relaxed_prim} and dual problem~\eqref{prelim_dual} after the $n$-th EV is processed, respectively. Then the \opd can be summarized as the following proposition.
\begin{pro}[Proposition~3.1~\cite{tan2020mechanism}]
\label{lem:local-conditions}
An online algorithm is $\alpha$-competitive if the following conditions hold:

(i)  $\hat{X}$ and $\bar{\Lambda}$ are feasible solutions of the primal problem~\eqref{prelim_relaxed_prim} and the dual problem~\eqref{prelim_dual}, respectively;

(ii) There exists an index $k\in\caln$ such that the following sufficient inequality holds:
\begin{itemize}
    \item (initial inequality) $P_k \ge \frac{1}{\alpha}D_k,$
    \item (incremental inequality) $P_n - P_{n-1} \ge \frac{1}{\alpha} (D_n - D_{n-1}), \forall n = k+1,\dots,N.$
\end{itemize}
\end{pro}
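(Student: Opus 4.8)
The plan is to run the standard online primal-dual engine: combine \emph{weak duality} with a telescoping of the per-EV increments in condition (ii). Once the bookkeeping of what $P_N$ and $D_N$ represent is pinned down, the argument collapses to a few lines, so I would organize it around two endpoint facts and one telescoping step.

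First I would identify the two boundary quantities. By condition (i) the constructed dual vector $\bar{\Lambda}$ is feasible for the dual~\eqref{prelim_dual}; since the primal is a maximization and its dual a minimization, weak duality gives that the final dual objective $D_N$ upper-bounds the optimum of the relaxed primal~\eqref{prelim_relaxed_prim}. Because relaxing $x_n\in\{0,1\}$ to $x_n\in[0,1]$ only enlarges the feasible region of a maximization, the relaxed optimum is in turn at least $\opt(\cali)$, so $\opt(\cali)\le D_N$. On the primal side, condition (i) guarantees $\hat{X}$ is primal feasible, and its objective after the last EV is exactly the social welfare the online algorithm collects, i.e. $P_N=\alg(\cali)$. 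These two identifications are the only place I would be careful, since everything else is mechanical.

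Next I would telescope the increments. Summing the incremental inequality $P_n-P_{n-1}\ge\frac{1}{\alpha}(D_n-D_{n-1})$ over $n=k+1,\dots,N$ collapses both sides to $P_N-P_k\ge\frac{1}{\alpha}(D_N-D_k)$. Adding the initial inequality $P_k\ge\frac{1}{\alpha}D_k$ then gives $P_N\ge\frac{1}{\alpha}D_N$. Chaining this with the two endpoint facts yields $\alg(\cali)=P_N\ge\frac{1}{\alpha}D_N\ge\frac{1}{\alpha}\opt(\cali)$, hence $\opt(\cali)/\alg(\cali)\le\alpha$, which is precisely $\alpha$-competitiveness.

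I do not expect any genuine obstacle \emph{inside} the proposition: the telescoping is routine and weak duality is off-the-shelf. The real difficulty lives upstream, in \emph{verifying the hypotheses} for the specific algorithm $\opa(\phi)$ — namely constructing $\bar{\Lambda}$ from the online run so that both dual constraints~\eqref{eq:dual-const1}--\eqref{eq:dual-const2} hold, and choosing an index $k$ (intuitively, the first EV after utilization crosses the threshold $\beta$) for which the initial and incremental inequalities can be certified from the pricing-function updates. That verification, not the proposition itself, is where the pricing-function design in Lemma~\ref{eq:threshfunc1} does its work; the proposition merely packages the conclusion.
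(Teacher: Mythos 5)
Your proof is correct: weak duality applied to the feasible $\bar{\Lambda}$ gives $\opt(\cali)\le D_N$, primal feasibility of $\hat{X}$ gives $P_N=\alg(\cali)$, and telescoping the incremental inequality from $k+1$ to $N$ on top of the initial inequality yields $P_N\ge\frac{1}{\alpha}D_N$, hence $\alpha$-competitiveness. The paper itself states this proposition without proof, importing it from the cited reference, and your argument is exactly the standard online primal-dual derivation underlying that result, so there is nothing to add.
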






Next we prove Lemma~\ref{eq:threshfunc1} based on Proposition~\ref{lem:local-conditions}.

\noindent\textbf{Constructing feasible solutions.} Given the feasible primal solution $\hat{X}: = (\hat\bx,\hat\by)$ produced by $\opa(\phi)$, we construct the dual solutions $\bar{\Lambda}: =(\bar\bmu,\bar\bla,\bar\bsigma,\bar\bet )$ as follows:
\begin{subequations}\label{eq:feasible-dual}
\begin{align}
\label{eq:dual-la}
\bar{\lambda}_t &= [\phi_t(w_t^{(N)}) - p_t]\cdot\mathbb{I}{\{w_t^{(N)} \ge \beta\}},\forall t\in\calt,\\
\label{eq:dual-mu}
\bar\mu_n &= \hat{\mu}_n = \max_{t\in\calt_n:\hat{y}_{nt} >0} \phi_t(w_t^{(n)}),
\\
\label{eq:dual-beta}
\bar\sigma_{nt}&= \hat{\sigma}_{nt} = [\hat{\mu}_n - \phi_t(w_t^{(n)})]\cdot\mathbb{I}{\{\hat{y}_{nt}>0\}},\\
\bar{\eta}_n &=
\begin{cases}
v_n- \bar\mu_n E_n +\sum_{t\in\calt_n}\bar\sigma_{nt}R_n, & \hat{x}_n = 1,\\
0, & \hat{x}_n = 0,
\end{cases}
\end{align}
\end{subequations}
where $\hat{\mu}_n$ and $\{\hat{\sigma}_{nt}\}_{t\in\calt_n}$ are the optimal dual variable of the cost minimization problem~\eqref{p:cost-minimization}, and $w_t^{(n)}$ is the utilization of the charging station at slot $t$ after processing EV $n$.

\begin{lemma}
The constructed dual variables in Equation~\eqref{eq:feasible-dual} are feasible for the problem~\eqref{prelim_dual}.
\end{lemma}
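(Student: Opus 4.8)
The plan is to verify, one block at a time, the two requirements that define dual feasibility for~\eqref{prelim_dual}: nonnegativity of $(\bar\bmu,\bar\bla,\bar\bsigma,\bar\bet)$, and the two linear inequalities~\eqref{eq:dual-const1} and~\eqref{eq:dual-const2} for every $n\in\caln$ and every $t\in\calt_n$. The whole argument is driven by the KKT system of the per-EV cost-minimization~\eqref{p:cost-minimization}, whose optimal multipliers are exactly the $\hat\mu_n,\hat\sigma_{nt}$ appearing in the construction~\eqref{eq:feasible-dual}. First I would record its stationarity and complementary-slackness relations: on every active slot ($\hy_{nt}>0$) the marginal-cost condition reads $\phi_t(w_t^{(n)})=\hat\mu_n-\hat\sigma_{nt}$; on every inactive slot ($\hy_{nt}=0$) it reads $\phi_t(w_t^{(n-1)})\ge\hat\mu_n$ with $\hat\sigma_{nt}=0$; the demand constraint is tight because $\hat\mu_n\ge L>0$; and $\hat\sigma_{nt}>0$ forces $\hy_{nt}=R_n$.

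Nonnegativity is then immediate. We have $\bar\mu_n\ge L>0$ as a maximum of pricing-function values; $\bar\sigma_{nt}\ge 0$ because on its support $\hat\mu_n$ dominates $\phi_t(w_t^{(n)})$; and $\bar\lambda_t\ge 0$ because on the support of the indicator $\phi_t(w_t^{(N)})=\varphi_t(w_t^{(N)})\ge\varphi_t(\beta)=L\ge p^{\max}\ge p_t$. The workhorse for the remaining steps is the price identity $\xi_n=\bar\mu_n E_n-\sum_{t\in\calt_n}\bar\sigma_{nt}R_n$, which I would derive by substituting the stationarity relation into the definition of $\xi_n$, using tightness of the demand constraint and the fact that $\hat\sigma_{nt}\hy_{nt}=\hat\sigma_{nt}R_n$. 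This identity turns $\bar\eta_n$ into $v_n-\xi_n$, so its nonnegativity is exactly the admission rule, and it makes~\eqref{eq:dual-const1} hold with equality when $\hx_n=1$ and as a strict inequality (via the rejection condition $v_n<\xi_n$) when $\hx_n=0$.

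The core of the proof is~\eqref{eq:dual-const2}, which I would establish by a case split that bounds $\bar\mu_n-\bar\sigma_{nt}$ by $\bar\lambda_t+p_t$. On active slots the stationarity relation gives $\bar\mu_n-\bar\sigma_{nt}=\phi_t(w_t^{(n)})$, while on inactive slots the KKT inequality gives $\bar\mu_n-\bar\sigma_{nt}=\bar\mu_n\le\phi_t(w_t^{(n-1)})$; in both cases monotonicity of $\phi_t$ together with the fact that the utilizations $w_t^{(n)}$ are nondecreasing in $n$ yields $\bar\mu_n-\bar\sigma_{nt}\le\phi_t(w_t^{(N)})$. When $w_t^{(N)}\ge\beta$ the indicator fires and $\bar\lambda_t+p_t=\phi_t(w_t^{(N)})$, closing the inequality.

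The hard part will be~\eqref{eq:dual-const2} on slots that remain in the flat segment, i.e. $w_t^{(N)}<\beta$. There $\bar\lambda_t$ has been zeroed by the indicator, $\phi_t$ equals the floor $L$ across the entire segment, and the bound above only gives $\bar\mu_n-\bar\sigma_{nt}\le L$, whereas the inequality now demands $\bar\mu_n-\bar\sigma_{nt}\le p_t$ with $p_t\le p^{\max}\le L$. Reconciling this gap is precisely where the design of $\phi^*$ and the choice $\beta=C/\alpha$ must be invoked, and I expect this flat-segment boundary — matching the per-EV marginal price $\bar\mu_n$, evaluated at the intermediate utilization $w_t^{(n)}$, against the final-state capacity multiplier $\bar\lambda_t$ at $w_t^{(N)}$ — to be the main obstacle; everything else reduces to routine KKT bookkeeping.
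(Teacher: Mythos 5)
Your proposal follows the same route as the paper's own proof: non-negativity, the KKT system of~\eqref{p:cost-minimization}, the price identity $\xi_n=\bar\mu_nE_n-\sum_{t\in\calt_n}\bar\sigma_{nt}R_n$ (the paper's Equation~\eqref{eq:relation}), constraint~\eqref{eq:dual-const1} via the admission/rejection rule, and a per-slot case split for~\eqref{eq:dual-const2}. Where the indicator fires you are in fact more careful than the paper, which writes $\bar\mu_n-\bar\lambda_t-p_t-\bar\sigma_{nt}=\bar\mu_n-\phi_t(w_t^{(n)})-\bar\sigma_{nt}=0$ and thereby silently identifies $\bar\lambda_t+p_t$ with $\phi_t(w_t^{(n)})$, ignoring both the indicator and the fact that $\bar\lambda_t$ is evaluated at the final utilization $w_t^{(N)}$; your monotonicity argument ($w_t^{(n)}\le w_t^{(N)}$ and $\phi_t$ non-decreasing) is the correct repair and yields the required $\le 0$ there.

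The genuine gap is the case you flag and then leave open: $t\in\calt_n$ with $w_t^{(N)}<\beta$. There $\bar\lambda_t=0$, and on an active slot $\bar\mu_n-\bar\sigma_{nt}=\phi_t(w_t^{(n)})=L$, so~\eqref{eq:dual-const2} reduces to $L\le p_t$, which fails whenever $p_t<L$ --- a situation Assumption~\ref{assumption4} explicitly permits. Your hope that ``the design of $\phi^*$ and the choice $\beta=C/\alpha$'' will reconcile this is misplaced: the obstruction is structural, since on the flat segment the marginal price equals $L$ no matter how $\beta$ or $\varphi_t$ are chosen, while the indicator zeroes $\bar\lambda_t$; no tuning of the pricing function removes it. Note that the paper's proof does not close this case either (its displayed chain of equalities is false when the indicator is off), so you have correctly located the soft spot rather than missed a known trick --- but as a proof of the lemma as stated, your write-up is incomplete at exactly this point. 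Closing it requires modifying the construction itself, e.g., enlarging $\bar\sigma_{nt}$ or $\bar\lambda_t$ on low-utilization slots so that~\eqref{eq:dual-const2} holds by fiat, and then re-verifying~\eqref{eq:dual-const1} and the initial inequality $P_k\ge \frac{1}{\alpha}D_k$ (which is the only reason the indicator was introduced); sharper bookkeeping within the given construction will not suffice.
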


\begin{proof}
It can be easily checked that all constructed dual variables are non-negative.
We aim to show the constructed dual variables satisfy dual constraints~\eqref{eq:dual-const1} and~\eqref{eq:dual-const2}.
 Since the cost minimization problem~\eqref{p:cost-minimization} is convex under the capacity-free instances, by checking its KKT conditions, we have the equations as follows:
\begin{subequations}\label{eq:KKT}
\begin{align}
 \label{eq:KKT1}
    &\phi_{t}(w_{t}^{(n)}) - \hat{\mu}_n + \hat{\sigma}_{nt} - \hat{\gamma}_{nt} = 0,\quad \forall t\in\calt_n\\ \label{eq:KKT2}
    &\hat{\mu}_n E_n-\hat{\mu}_n\sum_{t \in \calt_n}\hat{y}_{nt} = 0\\ \label{eq:KKT3}
    &\hat{\sigma}_{nt}\hat{y}_{nt}-\hat{\sigma}_{nt}R_n = 0,\quad \forall t\in\calt_n \\
    \label{eq:KKT4}
    &\hat{\gamma}_{nt} \hat{y}_{nt} = 0, \forall t\in\calt_n,
\end{align}
\end{subequations}
where the last equation is from the constraint $y_{nt} \ge 0$ and $\hat{\gamma}_{nt}$ is the corresponding dual variable.

By multiplying $\hy_{nt}$ on both sides of Equation~\eqref{eq:KKT1}, summing over $\calt_n$, and substituting equations~\eqref{eq:KKT2}-\eqref{eq:KKT4}, we can finally have
\begin{align}\label{eq:relation}
    \sum_{t\in\calt_n} \hy_{nt}\phi_t(w_t^{(n)}) = \hat{\mu}_t E_n - \sum_{t\in\calt_n} \hat{\sigma}_{nt} R_n.
\end{align}
Note that the constructed dual variables $\bar{\mu}_{n}$ and $\bar{\sigma}_{nt}$ in Equations~\eqref{eq:dual-mu} and~\eqref{eq:dual-beta} are exactly the same as $\hat{\mu}_n$ and $\hat{\sigma}_{nt}$.
Then we can check the dual constraints~\eqref{eq:dual-const1}.
When ${\hat{x}_n = 1}$, we have $v_n - \bar\mu_n E_n  +\sum_{t\in\calt_n}\bar\sigma_{nt}R_n -\bar\eta_n = 0$.
When ${\hat{x}_n = 0}$, we have $\bar\eta_n = 0$ and $v_n < \sum_{t\in\calt_n} \hy_{nt}\phi_t(w_t^{(n)}) = \bar{\mu}_t E_n - \sum_{t\in\calt_n} \bar{\sigma}_{nt} R_n$. Thus, $\bar{\Lambda}$ satisfies constraint~\eqref{eq:dual-const1}.

Next, we check the dual constraint~\eqref{eq:dual-const2}.

When $\hat{y}_{nt}>0$, we have $\bar\sigma_{nt} = \bar\mu_n - \phi_t(w_t^{(n)})$. Thus, $\bar\mu_n - \bar\la_t - p_t -\bar\sigma_{nt} = \bar\mu_n - \phi_t(w_t^{(n)}) -\bar\sigma_{nt} = 0$.
When $\hat{y}_{nt}=0$, $\bar\mu_n < \phi_t(w_t^{(n)})$ and $\bar\sigma_{nt} = 0$. This gives $\bar\mu_n - \bar\la_t - p_t -\bar\sigma_{nt} = \bar\mu_n - \phi_t(w_t^{(n)}) < 0$. Thus, $\bar{\Lambda}$ satisfies constraint~\eqref{eq:dual-const2} and this completes the proof.
\end{proof}
 

\noindent\textbf{Guaranteeing the sufficient inequality:} \label{subsec:guar_incr_ineq}
We aim to find a set of pricing functions $\phi$ and the corresponding CR $\alpha$ such that the sufficient inequality holds.

We first show there exists $k\in\caln$ such that the initial inequality holds. Let $k$ be the index of the first EV, after processing which the station utilization reaches $\beta$ in at least one slot. Under the infinitesimal scheduling assumption~\ref{assumption1}, we can consider there exist a set of time slots $\bar{\calt}_k \subseteq \calt_k$ such that $w_{t}^{(k)} = \beta, \forall t\in \bar{\calt}_k$ and $w_{t}^{(k)} < \beta, \forall t\in \calt \setminus \bar{\calt}_k$. Note that the first $k$ EVs are all admitted since the posted price $\xi_{n} = \sum_{t\in\calt_n} \hy_{nt}\phi_t(w_t^{(n)}) = L E_n \le v_n, \forall n\in[k]$.
Then we can have
\begin{subequations}
  \begin{align}
    P_k &= \sum_{n\in[k]} v_n - \sum_{n\in[k]}\sum_{t\in\calt_n} p_t \hat{y}_{nt},\\
    \label{eq:in1}
    &= \sum_{n\in[k]} [\bar{\eta}_n + \sum_{t\in\calt_n} \hy_{nt}\phi_t(w_t^{(n)})] - \sum_{n\in[k]}\sum_{t\in\calt_n} p_t \hat{y}_{nt}\\
    \label{eq:in2}
    &= \sum_{n\in[k]}\bar{\eta}_n  + \sum_{t\in\calt}  (L - p_t)\sum_{n\in[k]} \hy_{nt} \\
    &\ge \sum_{n\in[k]}\bar{\eta}_n  + \sum_{t\in\bar{\calt}_k}  (L - p_t)\beta\\
    \label{eq:in3}
    &\ge \frac{1}{\alpha}\sum_{n\in[k]}\bar{\eta}_n  + \sum_{t\in\bar{\calt}_k}  (L - p_t)\frac{C}{\alpha} = \frac{1}{\alpha} D_k,
\end{align}  
\end{subequations}
where \eqref{eq:in1} is obtained by substituting the definition of $\bar{\eta}$ and  Equation~\eqref{eq:relation}, \eqref{eq:in2} holds since $w_t^{(k)}\le \beta, \forall t\in\calt$, and \eqref{eq:in3} holds since $\beta \ge C/\alpha$.
Thus, given the sufficient condition in Lemma~\ref{eq:threshfunc1}, the initial inequality holds.


The increments of primal and dual objectives can be cast as
\begin{align*}
  P_n - P_{n-1} &=  v_{n} -  \sum_{t \in \calt_n}{p_{t} \hy_{nt}}, \\
  D_n - D_{n-1} &= \sum_{t \in \calt_n} {C(\phi_{t}(w_{t}^{(n)}) - \phi_{t}(w_{t}^{(n-1)}))} + \bar{\eta},
\end{align*} 
where we use $\phi_t(w_t^{(N)}) = \sum_{n \in \caln} [\phi_t(w_t^{(n)}) - \phi_t(w_t^{(n-1)})]$.
When $\hx_n = 0$, we have $P_n - P_{n-1} =  D_n - D_{n-1} = 0$ and thus the incremental inequality holds.  
When $\hx_n = 1$, we can show that the incremental inequality holds if the pricing function and the competitive ratio $\alpha$ satisfy the sufficient condition~\eqref{eq:thresh_ineq1} in Lemma~\ref{eq:threshfunc1}. 
To see this, note
\begin{subequations}
\begin{align}
 &D_n - D_{n-1}  \notag\\
 &=\sum_{t \in \calt_n} {C(\phi_{t}(w_{t}^{(n)}) - \phi_{t}(w_{t}^{(n-1)}))} + v_{n} - \bar{\mu}_{n}E_{n} +\sum_{t\in\calt_{n}}{\bar{\sigma}_{nt}R_{n}}\notag\\
 \label{eq:ineq0}
 &= \sum_{t \in \calt_n} {C(\phi_{t}(w_{t}^{(n)}) - \phi_{t}(w_{t}^{(n-1)}))} + v_{n} - \sum_{t\in\calt_n} \hy_{nt}\phi_t(w_t^{(n)})\\
 &= \sum_{t \in \calt_n} \left[{C(\phi_{t}(w_{t}^{(n)}) - \phi_{t}(w_{t}^{(n-1)})) - \hy_{nt}\phi_t(w_t^{(n)})}
 \right] + v_n\\ 
 \label{eq:ineq1}
 &\le \sum_{t \in \calt_n} \hy_{nt}\left[{C\phi_t'(w_t^{(n)}) - \phi_t(w_t^{(n)})}
 \right] + v_n\\
 \label{eq:ineq2}
 &\le \sum_{t \in \calt_n} \hy_{nt} \left[ (\alpha-1)\phi_t(w_t^{(n)}) - \alpha p_t\right] + v_n\\
 \label{eq:ineq3}
 &\le (\alpha - 1) v_n -  \alpha \sum_{t \in \calt_n} \hy_{nt}p_t + v_n = \alpha (P_n - P_{n-1}),
\end{align}    
\end{subequations}
where we substitute equation~\eqref{eq:relation} to obtain~\eqref{eq:ineq0}, use the inequality $\phi_{t}(w_{t}^{(n)}) - \phi_{t}(w_{t}^{(n-1)}) \le \hy_{nt}\phi_t'(w_t^{(n)})$ based on assumption~\ref{assumption1} in~\eqref{eq:ineq1}, apply the sufficient condition~\eqref{eq:thresh_ineq1} in~\eqref{eq:ineq2}, and use inequality $v_n \ge \sum_{t\in\calt_n} \hy_{nt} \phi_t(w_t^{(n)})$ (since $\hat{x}_n = 1$ in $\opa(\phi)$) in~\eqref{eq:ineq3}.

Therefore, $\opa(\phi)$ is $\alpha$-competitive if the sufficient condition~\eqref{eq:thresh_ineq1} holds in Lemma~\ref{eq:threshfunc1}. This completes the proof.  

\subsection{Proof of Lemma~\ref{lem:cap_lim_comp_ratio}} \label{proof:cap_lim_comp_ratio}
Let $\cali$ be an instance under the capacity-limited cases $\Psi^2$ and $\{w_t^{(N)}\}_{t \in \calt}$ be the final utilization of all time slots after \texttt{OPA} runs for all EVs in $\cali$. Assume the time horizon $T$ is long and is an integer multiple of $D^{\max}$. We divide the time horizon into $H = T/ D^{\max}$ partitions. We define $\calt^{h} := \{t \in  \calt: (h-1)D^{\max} + 1 \le t \le h D^{\max}\}$ as the set of time slots in the $h$-th time partition, and $\hat{\calt}^h = \calt ^ h \cup \calt ^{h+1}$. 
$\cali^{h}$ is the sub-instance of $\cali$ which includes EVs whose arrival times are in $\calt^{h}$. Furthermore, we define $\tilde{\cali}^h$ as a 3-partition sub-instance $\tilde{\cali}^h = \cali ^ {h-1} \cup \cali ^ {h} \cup \cali ^ {h+1}, h \in [H]$, where $\cali ^ 0 = \cali ^ {H+1} = 0$. 
The competitive ratio of \texttt{OPA} under instance $\cali$ is:

\begin{align}
    &\CR(\phi^*) = \frac{\opt(\cali)}{\alg(\cali)} = \frac{\sum_{h\in[H]}\opt(\cali^h)}{\sum_{h\in[H]}\alg(\cali^h)}\label{eq:upbound_comp_ratio_caplim}\\
    &= \frac{3\sum_{h\in[H]}\opt(\cali^h)}{\alg({\cali}^1)+\sum_{h\in[H]}\alg(\tilde{\cali}^h) + \alg({\cali}^H)} \le 3 \max_{h\in[H]} \frac{\opt(\cali^h)}{\alg(\tilde{\cali}^h)}\notag.
\end{align}
Therefore, for calculating CR, we need to calculate $\max_{h\in[H]} \frac{\opt(\cali^h)}{\alg(\tilde{\cali}^h)}$.

\begin{pro} \label{prepos1}
The total surplus of EVs that arrive in $\tilde{\cali}^h$ and accept the posted prices from $\opa(\phi^*)$ is lower bounded by
\end{pro}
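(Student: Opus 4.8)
The plan is to prove the lower bound
\[
\alg(\tilde\cali^h)\ \ge\ \sum_{t\in\hat\calt^h}\int_{0}^{w_t^{(N)}}\bigl[\phi_t^*(u)-p_t\bigr]\,du ,
\]
i.e. that the surplus harvested from the EVs of $\tilde\cali^h$ is at least the per-unit margin $\phi_t^*-p_t$ integrated over the utilization built up in the slots of $\hat\calt^h=\calt^h\cup\calt^{h+1}$. The strategy has three ingredients: (i) bound each admitted EV's surplus below by its posted price minus its energy cost; (ii) aggregate these per-EV bounds into a right-endpoint Riemann sum of $\phi_t^*$ in each slot, which dominates the integral because $\phi_t^*$ is non-decreasing; and (iii) argue by locality of the partition that, for every slot $t\in\hat\calt^h$, the entire final utilization $w_t^{(N)}$ is produced by EVs of $\tilde\cali^h$, so the admitted increments tile $[0,w_t^{(N)}]$ exactly.

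For (i), recall that an admitted EV $n$ satisfies $v_n\ge\xi_n=\sum_{t\in\calt_n}\phi_t^*(w_t^{(n)})\hy_{nt}$, where $w_t^{(n)}=w_t^{(n-1)}+\hy_{nt}$ is the post-update utilization. Hence its surplus obeys
\[
v_n-\sum_{t\in\calt_n}p_t\hy_{nt}\ \ge\ \sum_{t\in\calt_n}\bigl[\phi_t^*(w_t^{(n)})-p_t\bigr]\hy_{nt},
\]
and every summand is non-negative because $\phi_t^*\ge L\ge p^{\max}\ge p_t$ by Assumption~\ref{assumption4}. Summing over the admitted EVs of $\tilde\cali^h$ and discarding the non-negative contributions of slots outside $\hat\calt^h$ leaves $\sum_{t\in\hat\calt^h}\sum_{n}[\phi_t^*(w_t^{(n)})-p_t]\hy_{nt}$. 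For (ii), fix $t\in\hat\calt^h$ and order the admitted EVs charging in slot $t$ by arrival; their increments $[w_t^{(n-1)},w_t^{(n)}]$ are consecutive, and since $\phi_t^*$ is non-decreasing we get $\phi_t^*(w_t^{(n)})\hy_{nt}\ge\int_{w_t^{(n-1)}}^{w_t^{(n)}}\phi_t^*(u)\,du$; chaining these yields $\sum_n\phi_t^*(w_t^{(n)})\hy_{nt}\ge\int_0^{w_t^{(N)}}\phi_t^*(u)\,du$ once the increments cover the whole range $[0,w_t^{(N)}]$.

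Establishing that covering property is step (iii) and is the crux of the argument. Any EV charging in a slot $t\in\calt^h$ must satisfy $a_n\le t\le d_n-1$ with $d_n-a_n\le D^{\max}$, forcing $a_n\in\{(h-2)D^{\max}+2,\dots,hD^{\max}\}$, so its arrival lies in partition $h-1$ or $h$; the analogous computation for $t\in\calt^{h+1}$ places the arrival in partition $h$ or $h+1$. Thus every EV touching $\hat\calt^h$ belongs to $\cali^{h-1}\cup\cali^h\cup\cali^{h+1}=\tilde\cali^h$, and since rejected EVs leave the utilization unchanged, the admitted increments of $\tilde\cali^h$ reconstruct the full $w_t^{(N)}$ in each $t\in\hat\calt^h$. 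Combining (i)--(iii) gives the displayed bound, after which I would substitute the closed form of $\phi^*$ to rewrite $\int_0^{w}[\phi_t^*(u)-p_t]\,du$ as $(L-p_t)w$ for $w<C/\alpha$ and as $\tfrac{(L-p_t)C}{\alpha e}\,e^{\alpha w/C}$ for $w\ge C/\alpha$, which is the form that later gets matched against the upper bound $\opt(\cali^h)\le 2UCD^{\max}$.

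The main obstacle I anticipate is the bookkeeping in step (iii): one must check both endpoints of the availability window so that \emph{no} charge deposited in $\hat\calt^h$ originates from an EV outside $\tilde\cali^h$ (otherwise the increments fail to tile $[0,w_t^{(N)}]$ and the integral bound collapses), and one must confirm that dropping the partial contributions of $\tilde\cali^h$-EVs in the boundary slots $\calt^{h-1}$ and $\calt^{h+2}$ only weakens the bound. The monotone right-endpoint estimate in step (ii) is free of the infinitesimal-schedule Assumption~\ref{assumption1} for the direction of inequality we need, but that assumption ensures the tiling is fine enough that the bound is not needlessly loose when the ratio $\opt(\cali^h)/\alg(\tilde\cali^h)$ is finally evaluated.
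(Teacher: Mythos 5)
Your proposal is correct, and it reaches a bound that does the same job as the paper's, but by a genuinely different key inequality. The paper telescopes $\sum_{t}C\phi_t(w_t^{(N)})$ into per-EV increments $\sum_{t\in\calt_n}C[\phi_t(w_t^{(n)})-\phi_t(w_t^{(n-1)})]$ and then bounds each increment by $\alpha\,\Delta\alg_n$ via a three-case analysis on which segment of $\phi^*$ the utilizations fall in, using the explicit exponential form together with $1-e^{-x}\le x$ --- in effect exploiting the differential inequality $C\varphi_t'\le\alpha(\varphi_t-p_t)$ from Lemma~\ref{eq:threshfunc1}. You instead start from the acceptance condition $v_n\ge\xi_n$ and use only the monotonicity of $\phi_t^*$ to dominate the integral $\int_0^{w_t^{(N)}}[\phi_t^*(u)-p_t]\,du$ by a right-endpoint sum; the specific form of $\phi^*$ enters only at the end, when the integral evaluates to $\tfrac{C}{\alpha}[\phi_t^*(w)-p_t]$ for $w\ge C/\alpha$. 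Your route buys generality (it works for any non-decreasing pricing function, with no case analysis) and is arguably cleaner: it even sidesteps a small slip in the paper's Equation~\eqref{c22}, where the telescoping sum actually equals $C[\phi_t(w_t^{(N)})-\phi_t(0)]=C[\phi_t(w_t^{(N)})-L]$ rather than $C\phi_t(w_t^{(N)})$, so in the regime $w_t^{(N)}\ge C/\alpha$ your bound is at least as strong as what the paper's argument literally establishes. The paper's route, in exchange, lands directly on the closed form $\tfrac{C}{\alpha}\sum_t\phi_t(w_t^{(N)})$ that is plugged into Inequality~\eqref{eq:online_lower_bound}. Both proofs share the same locality/tiling step (every EV touching $\hat\calt^h$ lies in $\tilde\cali^h$), and your slightly weaker constant (the extra $-p_t$) is harmless downstream, since the paper itself discards the $+p^{\max}$ term of $\psi$ in both Case~(i) and Case~(ii).
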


\begin{equation} \label{c44}
    \alg(\tilde{\cali}^h) \ge \frac{1}{\alpha} \sum_{t \in \hat{\calt}^{h}} \phi_t(w_t^{(N)})C.
\end{equation}

\begin{proof}
We have
\begin{align} \label{c22}
    \sum_{t \in \hat{\calt}^{h}} \phi_t(w_t^{(N)})C =&\sum_{t \in \hat{\calt}^{h}} \sum_{n \in \cali} C[\phi_t(w_t^{(n)}) - \phi_t(w_t^{(n-1)})]\\
    =& \sum_{t \in \hat{\calt}^{h}} \sum_{n \in \tilde{\cali}^h: t \in \calt_n} C[\phi_t(w_t^{(n)}) - \phi_t(w_t^{(n-1)})]\notag\\
    \le &\sum_{n \in \tilde{\cali}^h} \sum_{t \in \calt_n} C[\phi_t(w_t^{(n)}) - \phi_t(w_t^{(n-1)})],\notag
\end{align}
where $\calt_n$ is the availability window of EV $n$. The second equality holds because the maximum duration of each EV is $D^{\max}$; hence, the EVs that can stay in $\hat{\calt}^h$ must be from $\tilde{\cali}^h$. The last inequality holds since the EVs in $\tilde{\cali}^h$ can stay up to partition $h + 2$.

We define $\Delta \alg_n = (v_n - \sum_{t \in \calt_n} p_t y_{nt})\hat{x}_n$ as the surplus increment of $\opa(\phi^*)$ after EV $n$ is processed. 
Next we aim to show
\begin{align} \label{ineq:case_1_2}
    \sum_{t \in \calt_n} C[\phi_t(w_t^{(n)}) - \phi_t(w_t^{(n-1)})] \le \alpha \Delta \alg_n.   
\end{align}


\noindent\textbf{Case(i).} When EV $n$ rejects the offer, both sides of~\eqref{ineq:case_1_2} are zero; thus, it stands correct.

\noindent\textbf{Case(ii).} When EV $n$ is admitted, we have $w_t^{(n)} = w_t^{(n-1)} + \hat{y}_{nt}$. Because the pricing function~\eqref{eq:thresh_func} has three segments; thus, we have three scenarios based on which segment $w_t^{(n)}$ and $w_t^{(n-1)}$ lie in.\\

\noindent\textbf{Case(iia)}: when $w_t^{(n)} \ge \beta$ and $w_t^{(n-1)} \ge \beta$, we have
\begin{subequations} \label{c33}
    \begin{align}
        &\sum_{t \in \calt_n} C[\phi_t(w_t^{(n)}) - \phi_t(w_t^{(n-1)})] \\
        &= C\sum_{t \in \calt_n} G \exp(w_t^{(n)} \alpha/C)[1- \exp(- \hat{y}_{nt} \alpha/C)]\label{c1}\\
        &\le C\sum_{t \in \calt_n} G \exp(w_t^{(n)} \alpha/C)\hat{y}_{nt} \alpha/C \label{c2}\\
        &= \alpha \sum_{t \in \calt_n} \hat{y}_{nt} [G \exp(w_t^{(n)} \alpha/C) + p_t] - \alpha \sum_{t \in \calt_n} p_t \hat{y}_{nt}\label{c3}\\
        &= \alpha[ \sum_{t \in \calt_n} \hat{y}_{nt} \phi_t(w_t^{(n)}) - \sum_{t \in \calt_n} p_t \hat{y}_{nt}]\label{c4}\\
        &\le \alpha [v_n - \sum_{t \in \calt_n} p_t \hat{y}_{nt}]] 
        \le \alpha \Delta \alg_n, \label{c6}
    \end{align}
\end{subequations}
where $G := \frac{L-p_t}{e}$, Equation~\eqref{c1} is obtained by substituting pricing function to~\eqref{eq:thresh_func}, and Inequality~\eqref{c6} holds because when EV $n$ admits the offer, we have $v_n \ge \sum_{t \in \calt_n} \hat{y}_{nt} \phi_t(w_t^{(n-1)})$ .

\noindent\textbf{Case(iib)}: when $w_t^{(n)} > \beta$ and $w_t^{(n-1)} < \beta$, we have
\begin{subequations}
    \begin{align}
        &\sum_{t \in \calt_n} C[\phi_t(w_t^{(n)}) - \phi_t(w_t^{(n-1)})] \\
        &= C\sum_{t \in \calt_n} [\phi_t(w_t^{(n)}) - L]\\
        \label{eq:s1}
        &\le C\sum_{t \in \calt_n} [\phi_t(w_t^{(n)}) - (G \exp(w_t^{(n-1)} \alpha/C) + p_t)]\\
        &= C\sum_{t \in \calt_n} G \exp(w_t^{(n)} \alpha/C)[1- \exp(-\hat{y}_{nt} \alpha/C)],
    \end{align}
\end{subequations}
where Inequality~\eqref{eq:s1} holds by noting that $G \exp(w_t^{(n-1)} \alpha/C) + p_t \le \phi_t(w_t^{(n-1)}) = L$. Then, we can continue the proof in the same procedure from Inequality~\eqref{c2} to ~\eqref{c6}.

\noindent\textbf{Case(iic)}: when $w_t^{(n)} < \beta$ and $ w_t^{(n-1)} < \beta$, we have $\phi_t(w_t^{(n)})=\phi_t(w_t^{(n)})=L$ and thus $\sum_{t \in \calt_n} C[\phi_t(w_t^{(n)}) - \phi_t(w_t^{(n-1)})] = 0 \le \alpha \Delta \alg_n$.
Therefore, in all cases, Inequality~\eqref{ineq:case_1_2} stands correct, and by combining
inequalities~\eqref{c22} and~\eqref{ineq:case_1_2}, we have Inequality~\eqref{c44}.
\end{proof}

To calculate the upper bound of the competitive ratio under capacity-limited instances $\Psi^2$, we aim to lower bound the objective value of $\opa(\phi^*)$ and upper bound offline optimum. 
In Figure~\ref{fig:worst_case}, we depict an instance that can lead to the least possible utilization in $\Psi^2$. 
In this instance, there are two groups of EVs that follow three rules: (a) group 1 leaves the charging station at time slot $t'$ and group 2 arrives at the charging station at time slot $t'$; (b) each EV in each group is available at the charging station for $D^{\min}$ slots; and (c) the energy demand of each EV equals $D^{\min}R_n$, which means each EV must be charged up to its rate limit in each time slot during its available window. 
We define $k_1$ ($k_2$) as the final utilization from group 1 (group 2) and $k_1$ and $k_2$ can be visualized by the heights of the red and blue rectangles in Figure~\ref{fig:worst_case}.

{Before we proceed, we note that the scheduling part of the proposed algorithm as a solution to problem~\eqref{p:cost-minimization} is a water-filling solution, which can be shown as
\begin{align}\label{eq:water-filling}
    \hat{y}_{nt} = \min\left\{\left[\phi^{-1}_t(\hat{\mu}_n) - w_t^{(n-1)}\right]^+, R_n\right\},
\end{align}
where $\hat{\mu}_n$ can be determined by solving $\sum_{t\in\calt_n} \hat{y}_{nt} = E_n$.
Thus, the per-slot charging rate scheduling starts with the least utilized slot and fills it up to either the water level or charging rate. The same process then will be applied to the next least utilized slot until the charging demand is satisfied.
This water-filling solution is derived based on the KKT condition~\eqref{eq:KKT}. Consider the following three cases: (i) if $\phi_t(w_t^{(n)}) > \hat{\mu}_n$, we have $\hat{\gamma}_{nt} > \hat{\sigma}_{nt} \ge 0$ and this gives $\hat{y}_{nt} = 0$ from the complementary slackness; (ii) if $\phi_t(w_t^{(n)}) < \hat{\mu}_n$, we have $\hat{\sigma}_{nt} > \hat{\gamma}_{nt} \ge 0$, and thus $\hat{y}_{nt} = R_n$; (iii) if $\phi_t(w_t^{(n)}) = \hat{\mu}_n$, we have $\hat{y}_{nt} = \phi^{-1}_t(\hat{\mu}_n) - w_t^{(n-1)}$. Combining these three cases gives the water-filling solution~\eqref{eq:water-filling}, and the determination of $\hat{\mu}_n$ is based on Equation~\eqref{eq:KKT2}.
}


\begin{figure}[t]
\includegraphics[width=0.6\linewidth, center]{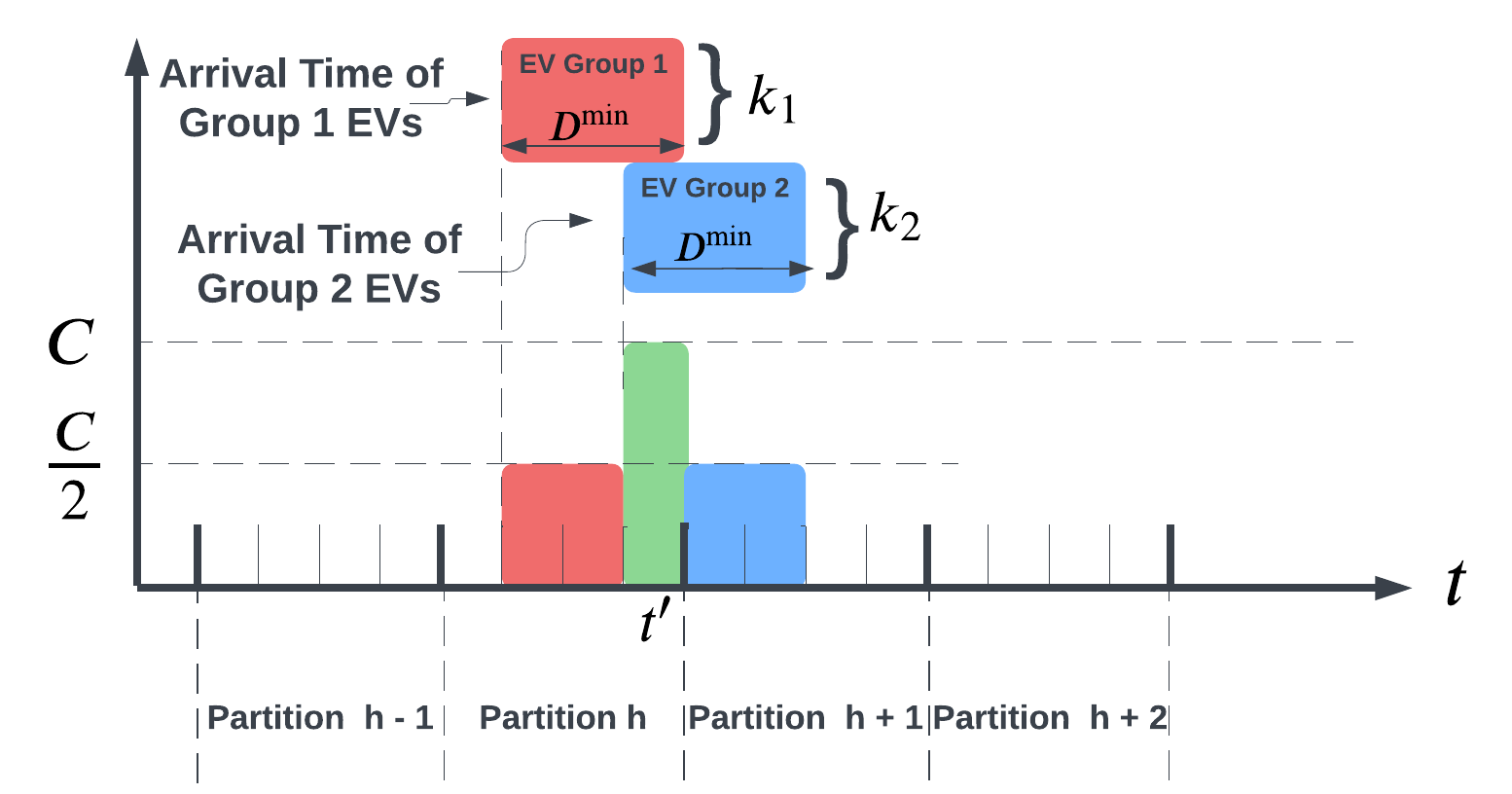} 
\caption{Illustration of the worst-case final utilization.} 
\label{fig:worst_case}
\end{figure}

In the following, we can argue that the final utilization obtained by $\opa(\phi^*)$ under any capacity-limited instance is no less than that depicted in Figure~\ref{fig:worst_case}.
First, there exists at least one time slot ($t'$ in Figure~\ref{fig:worst_case}) whose utilization is at its full capacity based on the definition of the capacity-limited instance.
To ensure the utilization of the overlapping slot reaches the capacity, each EV cannot have flexibility to shift its charging demand over time (each EV has $D_{\min}$ slots of available window and $r_n D_{\min}$ energy demand in Figure~\ref{fig:worst_case}) since otherwise the water-filling solution leads to higher utilization for the slots excluding $t'$.

We define $\psi(w)$ as
\begin{align*}
    \psi(w) = \begin{cases}
    L, & w \in [0, C/\alpha)\\
    \frac{L - p^{\max}}{e} e^{\frac{\alpha}{C} w} + p^{\max}, & w \in [C/\alpha, C]
    \end{cases},
\end{align*}
and it can be easily verified that
\begin{align}\label{eq:min_price_func}
   \psi(w) \le \phi_t^*(w), \quad \forall t \in \hat{\calt}^{h}, w \in [0,C]. 
\end{align}
Based on Proposition~\ref{prepos1} and Inequality~\eqref{eq:min_price_func}, we have
\begin{align}\label{eq:online_lower_bound}
    \texttt{ALG}(\tilde{\cali}^h) \ge \frac{C}{\alpha} [(D^{\min} - 1)\psi(k_1) + (D^{\min}-1) \psi(k_2) + \psi(C)].
\end{align}

According to Jensen's inequality and convexity of the pricing function, we additionally have $\psi(k_1) + \psi(k_2) \ge \psi(\frac{C}{2}) + \psi(\frac{C}{2})$ when $k_1 + k_2 = C$.
Because our pricing function ~\eqref{eq:thresh_func} is a two segment function, we have two cases:

\noindent\textbf{Case(i)}: if $\alpha \le 2$, then we have $k_1 = k_2 = \frac{C}{2} \le \beta = \frac{C}{\alpha}$ and
\begin{subequations}\label{eq:c1}
    \begin{align}
        \texttt{ALG}(\tilde{\cali}^h) \ge &\frac{C}{\alpha} [(2D^{\min}-2)L + \psi(C) ]\label{c1_subeq2}\\
        \ge &\frac{2D^{\min}LC}{\alpha}\label{c2_subeq2}\\ 
        \ge& \frac{2D^{\min}C(L-p^{\max})}{\alpha}\label{c1_subeq3},
    \end{align}
\end{subequations}
where Inequality~\eqref{c1_subeq2} stands correct because $\psi(k_1) = \psi(k_2) = L$.
Inequality~\eqref{c2_subeq2} is correct since ${(U/L)(D^{\max}/D^{\min}) \ge 2}$, and 
$\psi(C) \ge UD^{\max}/D^{\min}$ so we have: $ \psi(C) \ge 2L$.\\

\indent\textbf{Case(ii)}: if $\alpha > 2$, then $ k_1 = k_2 = \frac{C}{2} > \beta = \frac{C}{\alpha}$, and 
\begin{subequations}\label{eq:c2}
    \begin{align}
    \label{c2_subeq0}
        \texttt{ALG}(\tilde{\cali}^h) \ge &\frac{C}{\alpha} \Bigg((2D^{\min}-2)(\frac{L-p^{\max}}{e}) \exp (\frac{\alpha}{2})  +(\frac{L-p^{\max}}{e}) \exp (\alpha)\Bigg) \\
        \ge &\frac{C}{\alpha} \Bigg(2D^{\min} (\frac{L-p^{\max}}{e}) \exp(\frac{\alpha}{2})\Bigg), \label{c2_subeq1}
    \end{align}    
\end{subequations}

Inequality~\eqref{c2_subeq1} also stands because when $\alpha \ge 2$, we have $$(\frac{L-p^{\max}}{e}) \exp (\alpha) > 2 (\frac{L-p^{\max}}{e}) \exp (\frac{\alpha}{2}).$$

The upper bound of offline optimum is $\opt(\cali^h) \le 2D^{\max}CU$, which is the maximum value the offline algorithm could achieve in $2D^{\max}$ time slots, and the competitive ratio is as follows.
When $\alpha \le 2$, according to Inequality~\eqref{eq:c1}, we have
\begin{align*}
    \frac{\texttt{OPT}(\cali^h)}{\texttt{ALG}(\tilde{\cali}^h)} \le \frac{D^{\max} U  \alpha}{ D^{\min}(L-p^{\max})} \le 2\theta = 2\sqrt{e},
\end{align*}
where we apply $\alpha = 1 + 2\ln(\theta) = 2$.

When $\alpha > 2$, according to Inequality~\eqref{eq:c2} we have
\begin{align*}
    \frac{\texttt{OPT}(\cali^h)}{\texttt{ALG}(\tilde{\cali}^h)} \le \frac{D^{\max} U \alpha}{D^{\min} (\frac{L-p^{\max}}{e})  \exp (\frac{\alpha}{2})} = \frac{\theta \alpha}{\exp(\alpha/2 - 1)}.
\end{align*}

Combining the above two cases gives
\begin{align*}
    \frac{\texttt{OPT}(\cali^h)}{\texttt{ALG}(\tilde{\cali}^h)} \le \max\left\{2\sqrt{e}, \frac{\theta\alpha}{\exp(\alpha/2 - 1)}\right\}.
\end{align*}

Therefore, according to Inequality ~\eqref{eq:upbound_comp_ratio_caplim}, we have
\begin{align*}
    \frac{\opt(\cali)}{\alg(\cali)} &\le 3 \max_{h\in[H]}\frac{\opt(\cali^h)}{\alg(\tilde{\cali}^h)}\\
    &\le 3\max\left\{2\sqrt{e}, \frac{\theta\alpha}{\exp(\alpha/2 - 1)}\right\}.
\end{align*}
This completes the proof.

\section{Experimental Results}\label{sec:exp_res}
In this section, we report the experimental results using EV charging data traces. The goal is to evaluate the performance of the proposed algorithms as compared to alternatives and investigate the impact of the parameters on the performance of the algorithms. 
\subsection{Experimental Setup}
\noindent\textbf{Dataset.}
We use the Caltech ACN dataset~\cite{lee2019acn}, which includes more than 50K EV charging sessions from more than 50 charging stations. The dataset includes the arrival and departure time, the charging demand, and the charging rate for each EV. 

\noindent\textbf{Parameter settings and metrics.} To capture the scenario of limited capacity, we consider a charging station with an aggregate capacity smaller than the total charging demand of EVs. We conduct a 1-day trial using 90 consecutive days in which we generate 20 trials randomly and calculate the average of the 20 profit ratios for each day. Each EV $n$ has a different valuation ($v_n$) in each trial in a day, making $90 \times 20 = 1800$ trials in total. The ACN dataset does not include the private value ($v_n$) for EVs; hence, we follow the approach in~\cite{zeynali2021data} to estimate the value for EVs by modeling the distribution of historical arrivals. Each EV's value is a randomized function of the availability window, which means the higher the window, the higher the EV's value.
As the counterpart of the competitive ratio in the empirical setting, we use \textit{empirical profit ratio} as the performance metric, which is the ratio of the profit gained by the optimal solution to the offline problem.  

\noindent\textbf{Comparison algorithms.}
We compare our algorithm \texttt{OPA} with three other online algorithms: (1) \textit{A Utility-Based Online Algorithm} (\texttt{UBOA}), where the admission control is simply based on a first-come-first-served policy. For the scheduling, \texttt{UBOA} adapts a water-filling policy and starts with the slots with lower utilization up to either the capacity or EV charging rate. 
(2) \textit{Price Based Online Algorithm} (\texttt{PBOA}), where the admission policy is first-come first-served, but the scheduling is a water-filling approach based on the time-varying unit energy cost of each time slot, i.e., the scheduler starts with the time slot with lowest unit energy cost.
(iii) \textit{Online Mechanism with Myopic Price} (\texttt{OMMP})~\cite{sun2018eliciting}: The online algorithm proposed in~\cite{sun2018eliciting}, which calculates the charging schedule by solving a cost minimization based on the per-slot energy price. This per-slot energy price is a quadratic pricing function that we are utilizing to evaluate the algorithm's performance. The original maximization problem of ours \eqref{p:social-welfare} and~\cite{sun2018eliciting} are quite similar; therefore, we modify the cost function in~\cite{sun2018eliciting} into a linear one to compare the profit ratio fairly. Therefore, \texttt{OMMP} is an online algorithm that uses a pricing function based on the linear cost function to solve a cost minimization problem to calculate the candidate charging schedule and the corresponding marginal price for each unit of energy according to the pricing function. Then \texttt{OMMP} decides to admit EV $n$ if its surplus (value - total scheduled energy price) is positive.

\subsection{Experimental Results}
\noindent\textbf{Comparison results.}
In this experiment, we compare the performance of different algorithms in three different congestion levels: \textit{low, medium, and high}, where the capacity in the station is approximately up to 60\%, 30\%, and 15\% of the aggregate EV demand, respectively.
Figure~\ref{fig:congestion} depicts the cumulative distribution function (CDF) for empirical profit ratios of 1800 trials of experiments for all four algorithms in low, medium, and high congestion. The results show that in all congestion settings, \texttt{OPA} outperforms the alternatives, and the lower the congestion is, the better the \texttt{OPA} result gets as compared to other algorithms. A notable observation is that nearly 90\%, 80\%, and 70\% of the \texttt{OPA}'s profit ratios are less than 2 in low, medium, and high congestion scenarios, while these are substantially lower for all three alternative algorithms.

\begin{figure*}[h]
    \centering
    \subfigure[High congestion]{\includegraphics[width=0.3\textwidth]{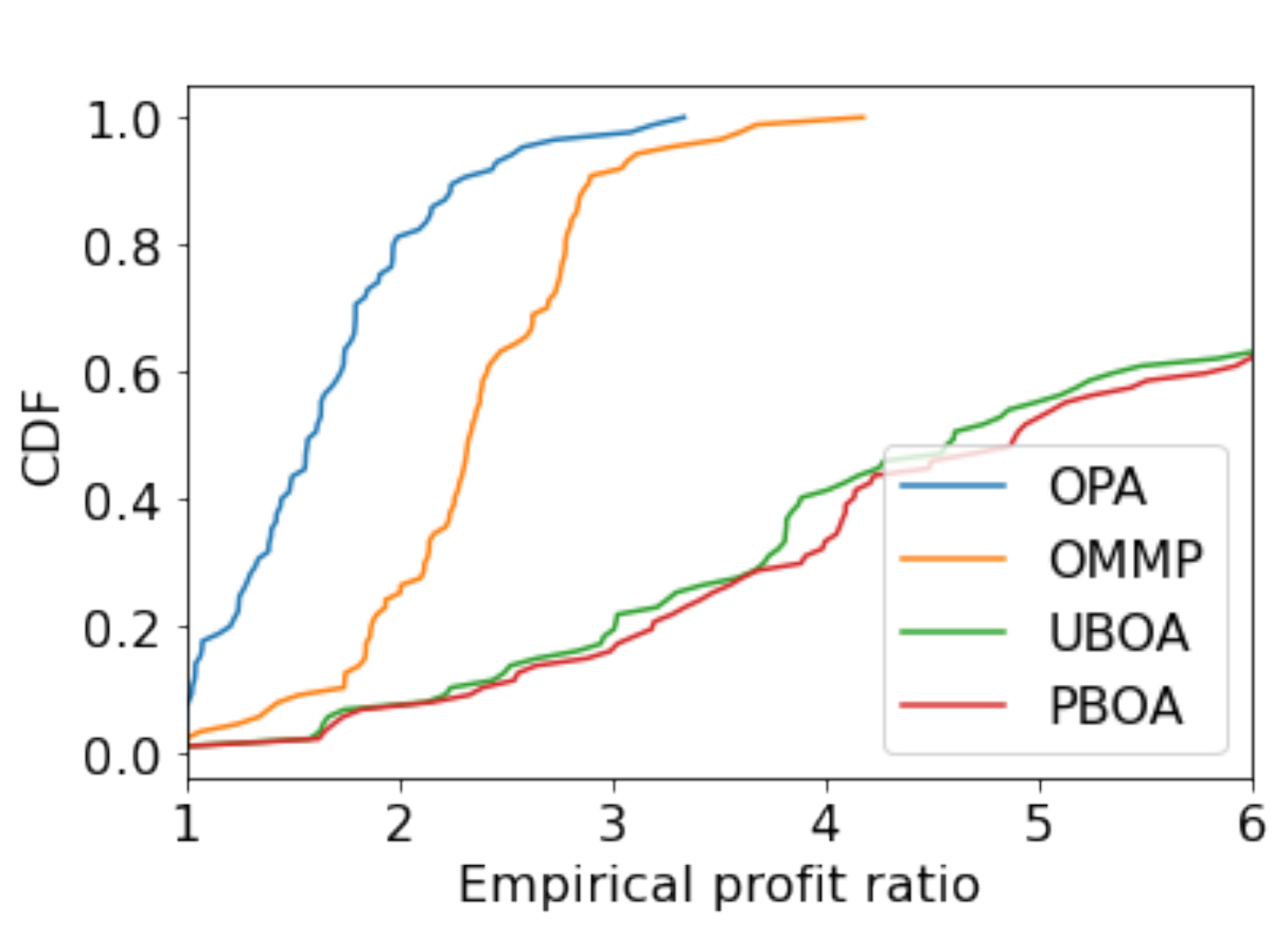}} 
    \subfigure[Medium congestion]{\includegraphics[width=0.3\textwidth]{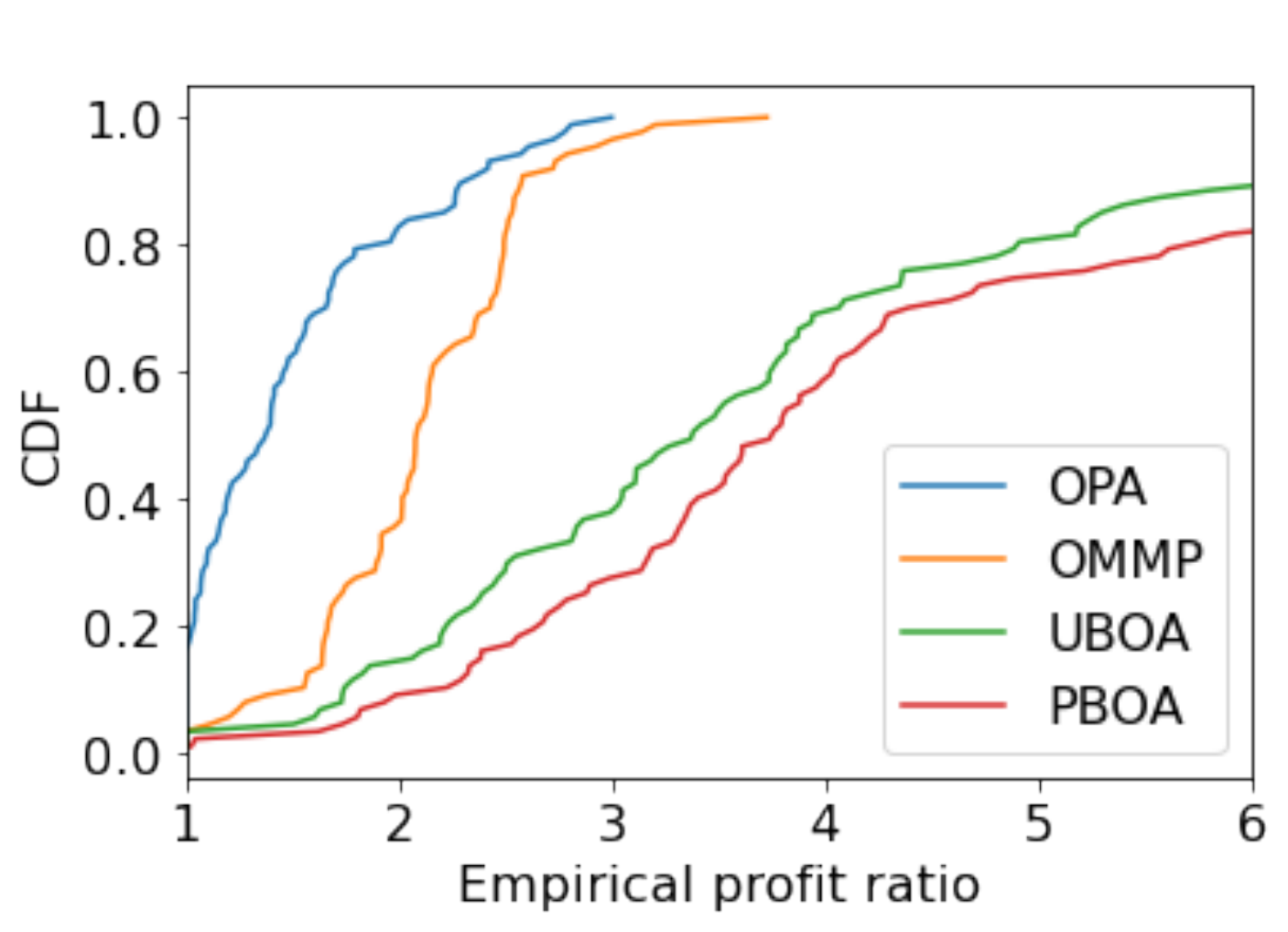}} 
    \subfigure[Low congestion]{\includegraphics[width=0.3\textwidth]{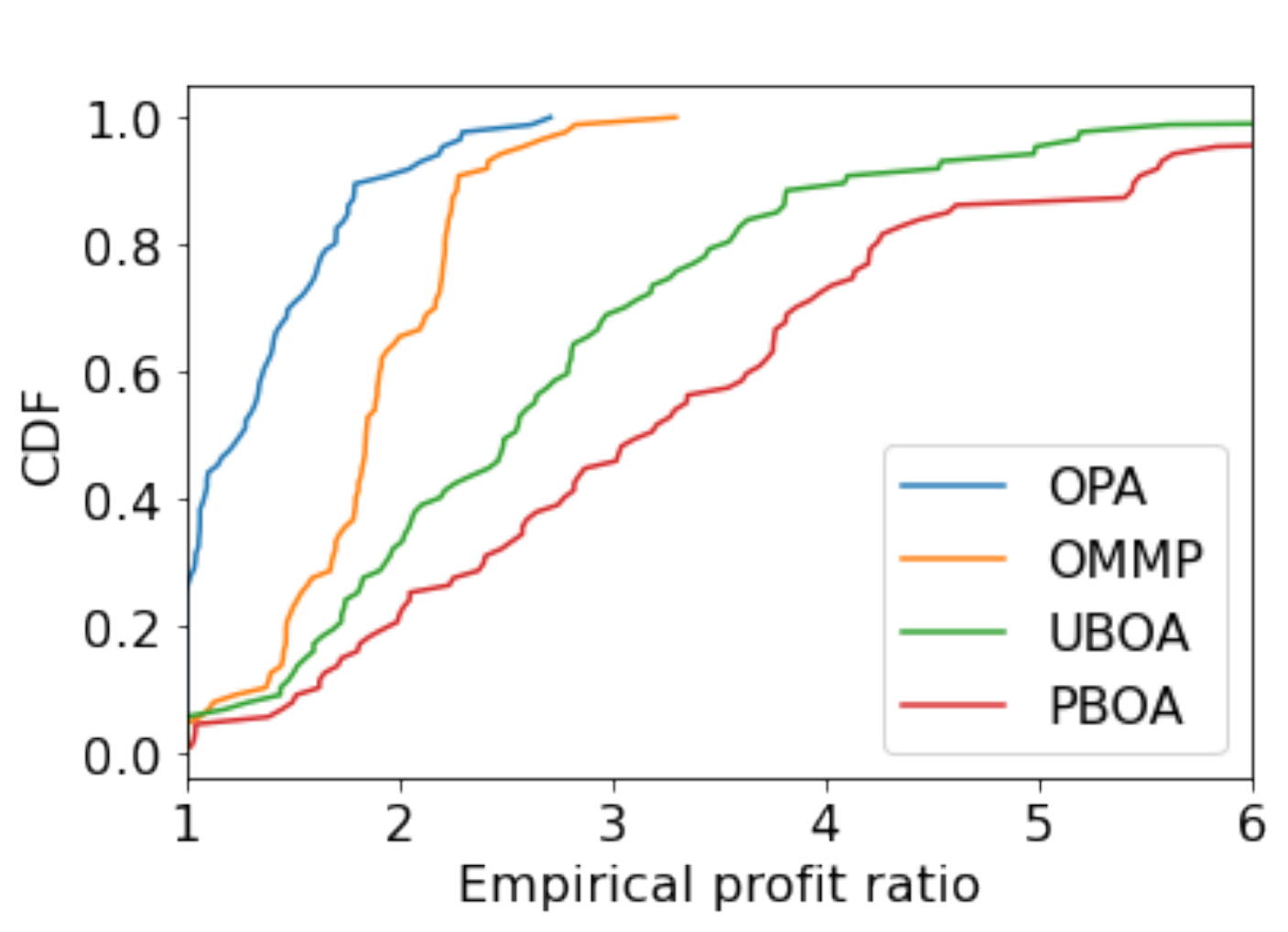}}
    \caption{Variation of the profit ratio in different congestion scenarios}
    \label{fig:congestion}
\end{figure*}

 \begin{figure*}
        \centering
         \subfigure[Capacity variation] 
            {
                \label{subfig:cap_variation}
                \includegraphics[width=.22\textwidth]{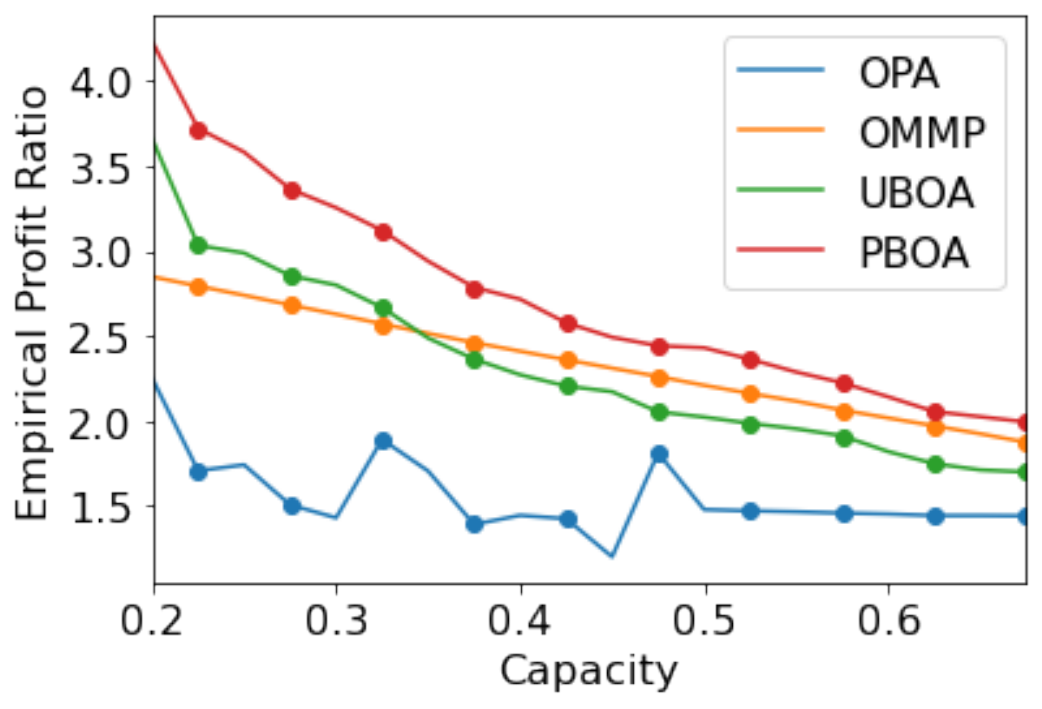} 
            } 
            \subfigure[Fluctuation ratio] 
            {
                \label{subfig:fluctuation}
                \includegraphics[width=.22\textwidth]{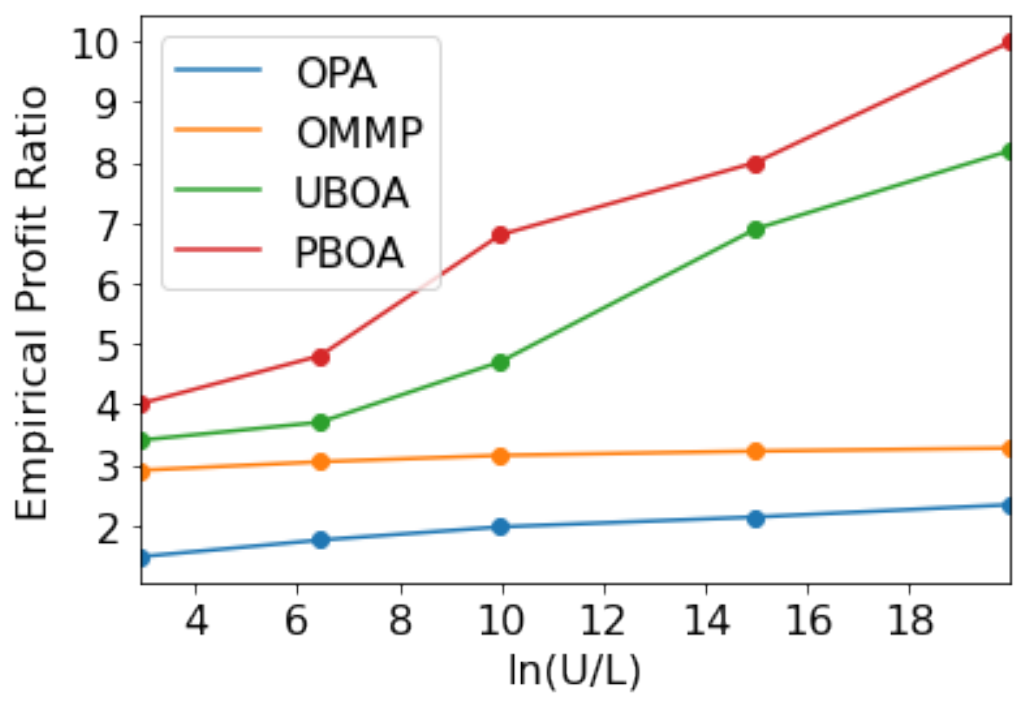} 
            }
            \subfigure[Duration variation] 
            {
                \label{subfig:duration}
                \includegraphics[width=.22\textwidth]{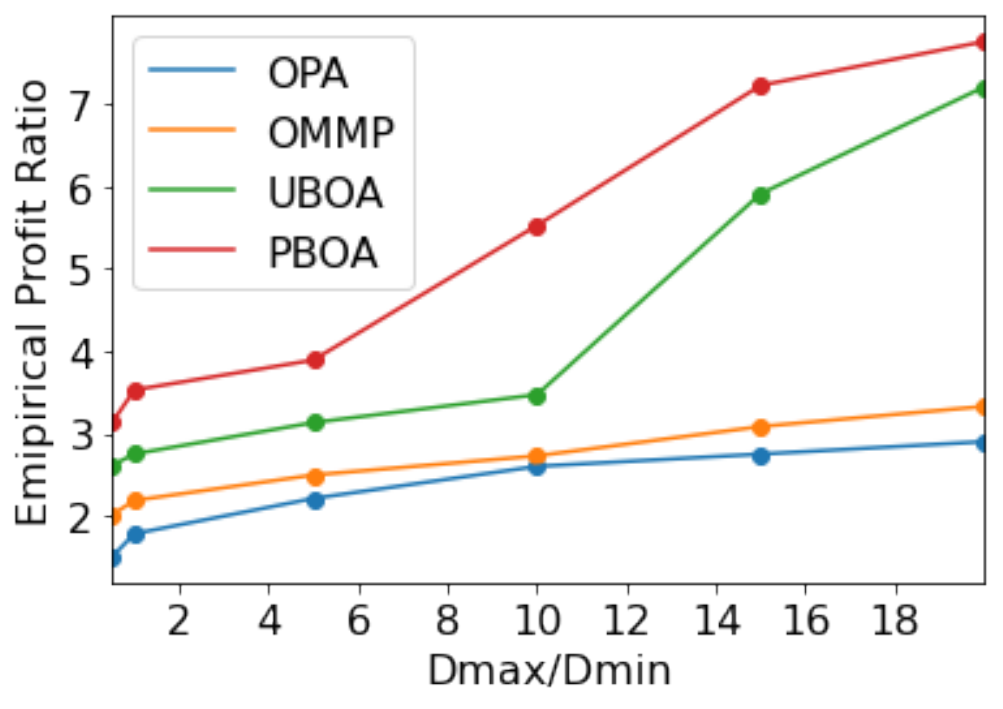} 
            } 
            \subfigure[$p^{\max}$ variation] 
            {
                \label{subfig:elec_price}
                \includegraphics[width=.22\textwidth]{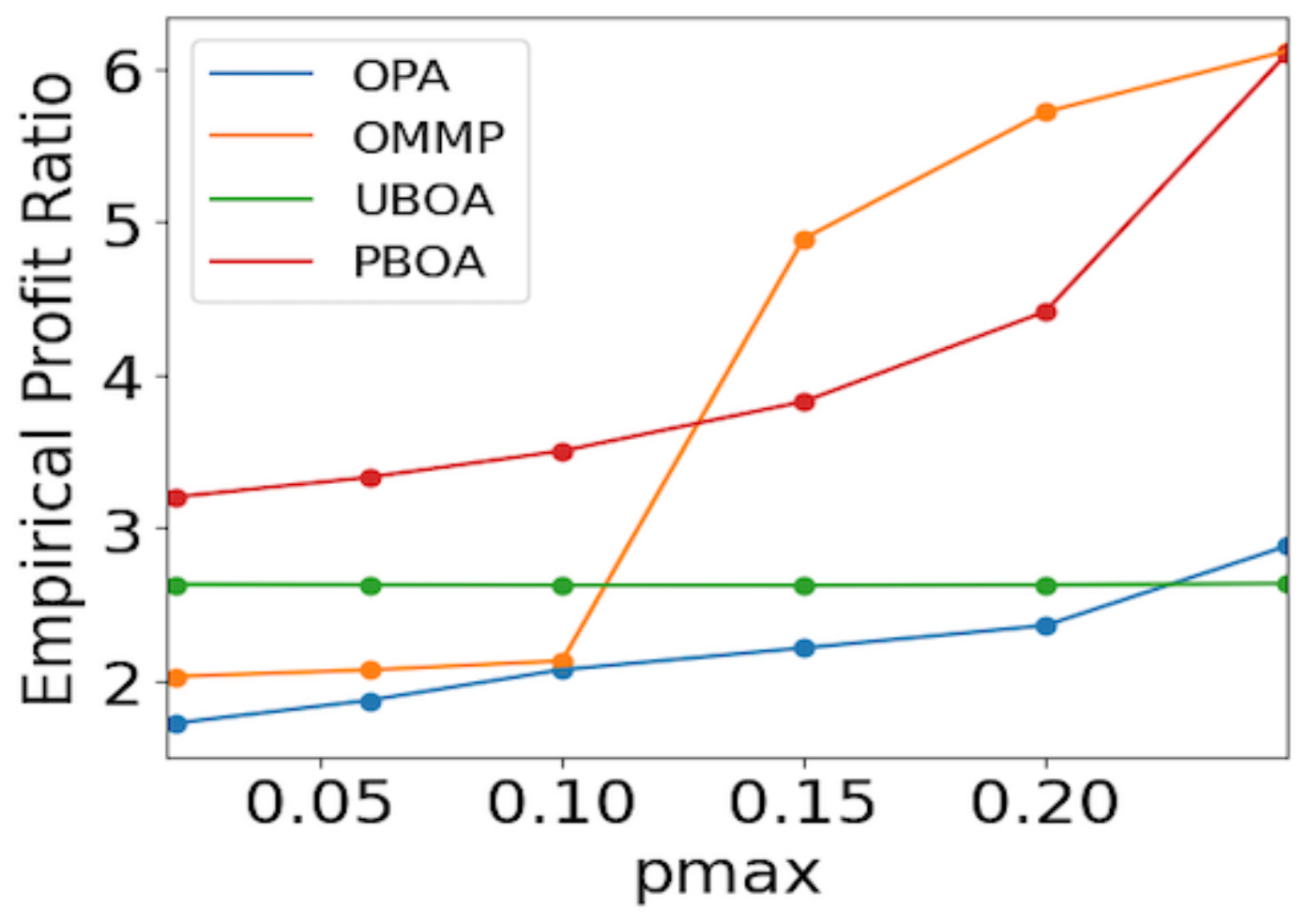} 
            }
            %
        \caption{Impact of parameters}
        \label{fig:metrics}
\end{figure*}

\noindent\textbf{The impact of parameters.}
In this experiment, we evaluate the impact of variation of the parameters on the empirical profit ratios. In particular, we vary the charging station capacity, fluctuation ratio ($\rho = U/L$), duration ratio ($\delta = D^{\max}/D^{\min}$), and the maximum electricity price ($p^{\max}$).
Figure~\ref{subfig:cap_variation} demonstrates that the empirical profit ratio decreased as the capacity increased. This observation makes sense since with more capacity, there is more room for EV admission and is aligned with the observation in three different congestion regimes in Figure~\ref{fig:congestion}. Figure~\ref{subfig:fluctuation} shows that the fluctuation ratio has a direct relation to the profit ratio, and the higher the fluctuation ratio, the higher the competitive ratio, which also approves the fact that in~\texttt{OPA} and ~\texttt{OMMP} the theoretical competitive ratio increases logarithmically with $\rho$. Figure~\ref{subfig:duration} shows when $\delta$ increases, there is a slight increase in the profit ratios of \texttt{OPA} and~\texttt{OMMP}; however, profit ratios of~\texttt{UBOA} and~\texttt{PBOA} increase drastically. 
This observation is due to the fact that different from ~\texttt{UBOA} and~\texttt{PBOA}, both \texttt{OPA} and~\texttt{OMMP} take into account the duration of each EV request in their algorithmic decision making. Finally, Figure~\ref{subfig:elec_price} shows that as the $p^{\max}$ increases, there's a slight increase in the profit ratio of \texttt{OPA}, but the slope of the increase is much smaller than those of~\texttt{OMMP} and ~\texttt{PBOA}. 
The profit ratio of~\texttt{UBOA} does not change because it is independent of electricity prices.


\section{Related Work}
\label{sec:related_works}
EV charging management has become a central topic of transportation electrification in recent decades. 
The problem is to decide where to charge, when to charge, and what rate to charge given the profile of EV requests (e.g., energy demand, departure time, etc.). 
There exist a large body of literature that schedule the EV charging in offline settings~\cite{8710609,8107290}.
This paper focuses on the online settings where EVs arrive one by one and one must decide the scheduling of each EV upon its arrival without knowing the future information.
In online EV charging, one stream of works studies how to charge EVs under unknown future electricity prices~\cite{deng2016whether, tsui2012demand, gan2012optimal, bitar2016deadline, deng2017whether, rotering2010optimal, lin2021minimizing}. In~\cite{deng2016whether}, the goal is to minimize the EV charging cost considering real-time pricing. Furthermore, in~\cite{tsui2012demand, gan2012optimal, bitar2016deadline, deng2017whether, rotering2010optimal}, the focus is on EV charging scheduling under forecasting the real-time prices or based on stochastic information of the real-time prices. The authors in~\cite{lin2021minimizing} study a multi-objective optimization to best balance the cost of charging and dissatisfaction of the EVs.
Different from this steam, we focus on the EV charging for a group of online arriving EVs from a theoretical approach of competitive analysis.

The online EV charging problem allocates limited charging capacity to a group of EVs under capacity limit where it is impossible to admit all requests and satisfy their requests. Therefore, there exists two strategies for designing the online algorithms. 
The first stream provides no guarantees on the received energy of EVs and makes best efforts to maximize the social welfare (utility of all EVs minus charging cost)~\cite{zheng2014online,lee2021adaptive,sun2020competitive}. Although~\cite{zheng2014online} works on maximizing social welfare, they do not guarantee to fully charge each admitted EV and charging station's charging capacity constraint into account. The work in~\cite{lee2021adaptive} has the charging station's charging capacity constraint. But, it does not also guarantee to charge the admitted EV up to its charging demand. In~\cite{sun2020competitive}, they propose an algorithm to maximize the social welfare considering the charging capacity. However, in~\cite{sun2020competitive}, energy cost is not considered and they only try to maximize the social welfare regardless of trying to minimize the cost EVs pay to the charging station. Our work is also more challenging than ~\cite{sun2020competitive} because we guarantee that if an EV is admitted, it will be fully charged to its demand; however, in ~\cite{sun2020competitive} the EV could be partially charged based on how much the charging station is crowded and its power capacity constraint. 

The second stream makes two decisions for each EV: admission control (whether to accept the EV for charging) and scheduling (how to charge the EV if it is admitted). This stream of work provides guaranteed service for all admitted EVs~\cite{alinia2019online, stein2012model,sun2018eliciting, tang2014online}. Our work lies in this stream of works. In ~\cite{alinia2019online}, they propose an algorithm to maximize the social welfare considering the charging capacity with on-arrival commitment which guarantees a charging amount upon arrival of each EV; however, they only consider sum of all EVs utilities and disregard the electricity cost and only guarantee to charge the EV larger than a fraction of its demand not to charge up to its full demand.
In \cite{stein2012model}, whenever an agent is selected,
their mechanism pre-commits to charging the vehicle by its reported departure time, but maintains flexibility about when the charging takes place and at what rate. The online mechanism in~\cite{stein2012model} does not have on-arrival commitment and commits to charge the EV at least to a fraction of its demand in arbitrarily time after its arrival. 

In terms of considering the resource cost, our work is most relevant to~\cite{sun2018eliciting} because it considers the charging cost and also guarantees to charge an EV up to its demand if it is admitted. In~\cite{sun2018eliciting}, the authors propose a cost function and design a pricing function heuristically and then analyze its  competitive ratio. The derivation of the pricing function in our work is based on the well-established online primal-dual framework and solving partial differential equations. This approach leads to have smaller competitive ratio and better empirical profit ratio compared to~\cite{sun2018eliciting}.Last,~\cite{tang2014online} also guarantees charging of admitted EVs up to their demand and its objective function is minimizing the total energy cost, but ignores the capacity constraints of EV charging stations.  

\section{Conclusions}
\label{sec:concludsion}
In this paper, we developed a competitive algorithm for joint pricing and scheduling of EV charging where EVs arrive at the station in an online manner. Our theoretical results are of independent interest since it provides the first order-optimal algorithm for the integral version of the online knapsack problem with reusable resources and time-varying resource cost. Using real data traces, we demonstrated the proposed algorithms outperform the alternatives. 

An interesting future direction is to extend the results into the multi-station setting, where the EVs submit their charging request to multiple stations simultaneously, and the service provider should determine whether or not to accept the request and, if so, when (scheduling) and where (station assignment) to charge the vehicle. 

\appendix

\clearpage
\bibliographystyle{plain}
\bibliography{bibliography.bib}


\end{document}